\documentclass[preprint,11pt]{elsarticle}
\usepackage[nointegrals]{wasysym}   
\usepackage{standalone}

\usepackage{amsmath, amssymb, amsthm}
\theoremstyle{definition}

\newtheorem{theorem}{Theorem}

\usepackage{graphicx}
\usepackage{float}
\usepackage{subcaption}
\graphicspath{{figures/}}
\usepackage{booktabs}
\usepackage[table]{xcolor}
\usepackage{tikz}
\usetikzlibrary{calc, matrix, backgrounds, fit, arrows.meta}

\definecolor{edgegreen}{HTML}{6CBF70}        
\definecolor{xred}{HTML}{C0392B}             
\definecolor{chordorange}{HTML}{E8913A}      
\definecolor{edgeblue}{HTML}{6C8EBF}         
\definecolor{edgered}{HTML}{B85450}          
\definecolor{coreblue}{HTML}{DAE8FC}         
\definecolor{corebluestroke}{HTML}{6C8EBF}   
\definecolor{coreyellow}{HTML}{FFF2CC}       
\definecolor{coreyellowstroke}{HTML}{D6B656} 
\definecolor{hlblue}{HTML}{DAE8FC}           
\definecolor{hlred}{HTML}{F8CECC}            
\definecolor{pathband}{HTML}{D5E8D4}         
\definecolor{chordcell}{HTML}{FFE6CC}        

\tikzset{
  physnode/.style={circle, draw=corebluestroke,   line width=0.9pt, fill=coreblue,
                   inner sep=0pt, minimum size=6pt},
  intnode/.style={ circle, draw=coreyellowstroke, line width=0.9pt, fill=coreyellow,
                   inner sep=0pt, minimum size=6pt},
  tnbond/.style={line width=0.9pt, black!85},
  tnleg/.style={ line width=0.9pt, black!85},
  chordbond/.style={line width=0.9pt, chordorange},
  corelbl/.style={font=\scriptsize, inner sep=2pt},
  tracepath/.style={edgegreen, dashed, dash pattern=on 5pt off 3pt, line width=1.1pt,
                    -{Stealth[length=2.6mm,width=2.2mm]}, shorten >=2pt, shorten <=2pt},
  deadpath/.style={ xred, dashed, dash pattern=on 5pt off 3pt, line width=1.1pt,
                    -{Stealth[length=2.6mm,width=2.2mm]}, shorten >=2pt, shorten <=2pt},
}

\usepackage{geometry}
\usepackage{algorithm}
\usepackage{algpseudocode}

\algrenewcommand\algorithmiccomment[1]{%
    \hfill\(\triangleright\) \textit{\small #1}}

\usepackage{todonotes}

\begin{document}
\begin{frontmatter}

\title{Computing with traceable tensor networks}

\author[ucsc]{Sarah Ellwein}
\author[ucsc]{Daniele Venturi\corref{cor}}
\ead{venturi@ucsc.edu}

\address[ucsc]{Department of Applied Mathematics,
University of California Santa Cruz\\ Santa Cruz (CA) 95064}

\cortext[cor]{Corresponding author}

\journal{ArXiv}

\begin{abstract}
We introduce a new SVD-based tensor decomposition method for tensor networks with arbitrary graph topologies, extending classical hierarchical SVD-based techniques to networks with cycles and general connectivity. We also introduce addition and rounding procedures for traceable tensor graphs, enabling step-rounding time integration of high-dimensional PDEs directly in graph format, with rank truncation controlled to a prescribed tolerance at every time step. We demonstrate the new method on the decomposition of multivariate functions and on the numerical solution of the Fokker--Planck equation, and find that the graph-format representation attains comparable or better accuracy than the classical tensor train and hierarchical Tucker tensor formats, while using substantially fewer degrees of freedom at lower computational cost.
\end{abstract}

\begin{keyword}
Tensor networks \sep approximation of high-dimensional functions \sep tensor train \sep hierarchical Tucker \sep Fokker--Planck equation \sep low-rank methods \sep high-dimensional PDEs.
\end{keyword}

\end{frontmatter}

\section{Introduction}

Approximating high-dimensional problems via numerical algorithms usually runs into a fundamental challenge: the number of degrees of freedom (DOF) grows exponentially with the dimension of the problem. Classical examples include the approximation of high-dimensional functions and numerical solvers for high-dimensional partial differential equations (PDEs) , such as the Fokker–Planck equation \cite{risken1996,dektor2021, rodgers2023,tang2024}, the BBGKY hierarchy in kinetic theory \cite{boelens2020a, cercignani, dimarco2014, einkemmer2025}, and functional differential equations \cite{rodgers2024, venturi2021, venturi2018}. 

When the solution admits a low-rank structure, numerical tensor methods can mitigate this growth via low-rank approximations. A successful class of such methods are based on singular value decompositions (SVD), which construct each factor of the representation by applying a hierarchical sequence of SVDs to the target tensor. 
Two widely used tensor formats in this class, namely tensor train (TT) \cite{oseledets2011} and hierarchical Tucker (HT) \cite{grasedyck2010}, yield provable quasi-optimal low-rank approximations and reduce DOF growth from exponential to polynomial in dimension. 
SVD-based decompositions have been extended to other topologies, including the tensor ring (TR) \cite{zhao2016}, tensor wheel (TW) \cite{wangTW2024}, and the fully-connected tensor network (FCTN) \cite{zheng2021, wangFCTN2024}. FCTN in particular generalizes the chain and tree topologies of TT and HT to an arbitrary graph and admits a recursive SVD construction (FCTN-SVD) \cite{wangFCTN2024}.

These tensor formats extend naturally to high-dimensional PDEs, where the curse of dimensionality represents one of the main computational bottlenecks. As an example, consider an initial-value problem of the form
\begin{equation}
    \frac{\partial u}{\partial t} = \mathcal{L}(u),
    \qquad u(\mathbf{x}, 0) = u_0,
    \label{eq:pde}
\end{equation}
where $u(\mathbf{x}, t)$ depends on $d$ variables $\mathbf{x}$ and $\mathcal{L}$ is a linear differential operator, possibly depending on $\mathbf{x}$. Discretizing in space on a tensor-product grid yields a semi-discrete system $\dot{\mathbf{u}} = \mathbf{F}(\mathbf{u})$, where $\mathbf{u}(t) \in \mathbb{R}^{n_1 \times \cdots \times n_d}$ is a $d$-mode tensor whose full storage cost is prohibitive even for moderate $d$ and $n_j$.  For instance, with $n_j = 200$ points per dimension and $d = 6$, one has $64 \times 10^{12}$ (64 trillion) grid points, requiring about 512 terabytes of storage in double precision.

To overcome this problem, one may consider a low-rank representation of $\mathbf{u}(t)$ in a chosen tensor format throughout the simulation. One such method is {\em step-truncation} (or step-rounding), in which the tensor solution undergoes a time step forward using any conventional time-stepping scheme, followed by a recompression to remain on a low-rank manifold \cite{rodgers2022,dektor2021,venturi2018}.
As an example, the forward Euler step-truncation scheme takes the form
\begin{equation}
    \mathbf{u}^{n+1}
    \;=\; \mathfrak{R}_{\varepsilon}\!\left(
        \mathbf{u}^{n} + \Delta t \,  \mathbf{F}(\mathbf{u}^{n})
    \right),
    \label{eq:step-truncation}
\end{equation}
where $\mathfrak{R}_{\varepsilon}(\cdot)$ is a rounding operator that projects the input back onto the low-rank manifold up to a prescribed tolerance $\varepsilon$. Two primitive operations are required for this scheme to operate: addition, since $\mathbf{u}^{n} + \Delta t\,  \mathbf{F}(\mathbf{u}^{n})$ has rank up to the sum of the ranks of its summands, and rounding $\mathfrak{R}_{\varepsilon}$, which restores low-rankness after each step. 
Both TT and HT tensor formats satisfy these requirements: the addition is performed by a block-diagonal concatenation of factors, and $\mathfrak{R}_{\varepsilon}$ is implemented as an SVD-based sweep along the structure of the format \cite{oseledets2011, grasedyck2010}. A similar SVD-based rounding has been developed for the Tensor Ring (TR) format, where addition uses max-concatenation in place of block-diagonal concatenation, allowing an SVD sweep to compress inflated ranks \cite{mickelin2020}. 
While various algorithms exist for the SVD-based construction of low-rank tensor network representations of a $d$-dimensional full tensor, to our knowledge, no SVD-sweep based rounding procedures are available for tensor network formats other than TT, HT, or TR.

In this paper, we develop new SVD-based algorithm to compute arbitrary graph tensor network (GTN) decompositions with user-specified network topology---given the topology admits a graph structure---we call GTN-SVD. For the subclass of {\em traceable graphs}--those whose network admits a traceable path--we further introduce addition and rounding operations inspired by their TT counterparts \cite{oseledets2011}. Together, these operations close the step-truncation workflow on traceable GTNs, allowing time-evolving low-rank solutions to be carried in graph topologies beyond classic formats like TT, HT or TR. The rounding procedure exploits the traceable path ordering to reduce the SVD sweep to a TT-style pass. We demonstrate the resulting integrator on a four-dimensional Fokker-Planck equation, where the traceable GTN representation achieves comparable accuracy to TT and HT at lower DOF and reduced per-step rounding time.

As a preview of what a GTN decomposition can offer, consider the 4D multivariate function 
\begin{equation}
\begin{aligned}
f(\mathbf{x}) =
&\exp\!\Big(\!-4\Big[\cos(x_1-\pi)+\cos(x_3-\pi)-0.5\big)^2
   + 1.5\,\sin(x_1-\pi)\sin(x_3-\pi)+\\[2pt]
&1.6\,\cos(x_2-x_4) + 1.2\,\cos\!\big(2(x_2+x_4)\Big]\Big),
\end{aligned}
\label{eq:spotlight}
\end{equation}

\noindent
In Figure~\ref{fig:spotlight_comp5} we compare the performance of our GTN-SVD algorithm in a barbell topology (GTN-BB) against TT on this function. The two formats attain the same accuracy at very different cost: at a prescribed tolerance $\varepsilon = 10^{-8}$ the barbell requires $8,802$ degrees of freedom against $3.37\times 10^{6}$ for TT, a factor of $382$ fewer, and this factor grows to $498$ at $\varepsilon = 10^{-10}$. The same advantage carries over to time integration. Applying the two formats to a heat diffusion equation with a step truncation scheme, GTN-BB holds one to two orders of magnitude fewer degrees of freedom than TT throughout the evolution while meeting the same per-step tolerance. 
These preliminary results, detailed later in this paper, motivate the central question addressed here: how to decompose, add, and round tensors in a traceable graph format so as to enable efficient computation on low-rank tensor manifolds.

\begin{figure}[t]
\centering
\begin{tikzpicture}[scale=1.0, >=latex]
  \begin{scope}[shift={(0,0)}]
    \coordinate (H1) at (-1.9, 0.9); \coordinate (H3) at (-1.9,-0.9);
    \coordinate (S1) at (-0.7, 0);   \coordinate (S2) at ( 0.7, 0);
    \coordinate (H2) at ( 1.9, 0.9); \coordinate (H4) at ( 1.9,-0.9);
    \draw[tnbond] (H1)--(H3); \draw[tnbond] (H1)--(S1); \draw[tnbond] (H3)--(S1);
    \draw[tnbond] (S1)--(S2);
    \draw[tnbond] (H2)--(S2); \draw[tnbond] (H4)--(S2); \draw[tnbond] (H2)--(H4);
    \draw[tnleg] (H1) -- ++(135:0.6) node[font=\small, shift={(135:0.18)}]{$i_1$};
    \draw[tnleg] (H3) -- ++(225:0.6) node[font=\small, shift={(225:0.18)}]{$i_3$};
    \draw[tnleg] (H2) -- ++(45:0.6)  node[font=\small, shift={(45:0.18)}]{$i_2$};
    \draw[tnleg] (H4) -- ++(315:0.6) node[font=\small, shift={(315:0.18)}]{$i_4$};
    \node[physnode] at (H1) {}; \node[physnode] at (H3) {};
    \node[physnode] at (H2) {}; \node[physnode] at (H4) {};
    \node[intnode]  at (S1) {}; \node[intnode]  at (S2) {};
    \node[font=\small] at (0,-1.95) {GTN-BB};
  \end{scope}
  \begin{scope}[shift={(6.0,0)}]
    \foreach \i in {1,...,4}{\coordinate (T\i) at ({1.15*\i-2.875},0);}
    \draw[tnbond] (T1)--(T2)--(T3)--(T4);
    \foreach \i in {1,...,4}{
      \draw[tnleg] (T\i) -- ++(0,-0.75) node[below,font=\small]{$i_{\i}$};}
    \foreach \i in {1,...,4}{\node[physnode] at (T\i) {};}
    \node[font=\small] at (0,-1.95) {TT};
  \end{scope}
\end{tikzpicture}\\[2.5ex]
\includegraphics{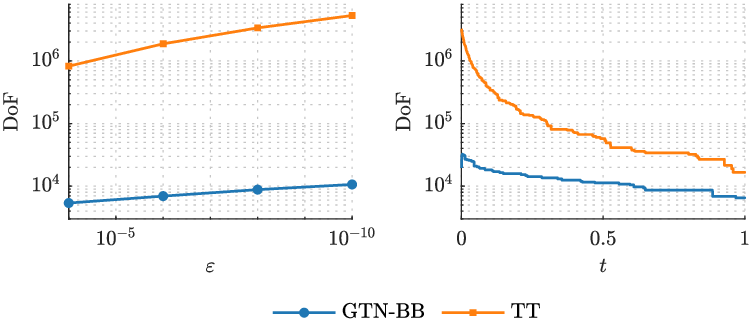}
\caption{Performance of our GTN-SVD algorithm on a barbell topology (GTN-BB) versus TT. Top: the two graph topologies, GTN-BB (left) and TT (right). Bottom: degrees of freedom for the tensor representation of the function \eqref{eq:spotlight} versus prescribed tolerance $\varepsilon$ (left), and versus time $t$ for the time-dependent solution of a four-dimensional diffusion PDE on a torus (right).}
\label{fig:spotlight_comp5}
\end{figure}

The remainder of the paper is organized as follows. In Section \ref{sec:SVD-GTN}, we establish the graph representation of tensor networks and introduce a new SVD-based algorithm that decomposes a tensor into an arbitrary user-defined graph tensor format. In Section \ref{sec:operations}, we define the essential operations---addition, Hadamard product, and rounding---that act on an arbitrary traceable GTN decomposition while remaining in its format, without reconstructing the full tensor. In Section \ref{sec:numerical_experiments} we present numerical examples applying the proposed algorithms to multivariate functions and to the 4D Fokker--Planck equation. The main findings are summarized in Section 
\ref{sec:summary}.

\section{SVD-Based Algorithms for Graph Tensor Networks}
\label{sec:SVD-GTN}

In this section, we introduce the tensor notation used throughout the paper, review the representation of tensor networks as graphs, and define the rank adjacency matrix, a compact encoding of a tensor network's topology and dimensions. Building on this, we then present our GTN-SVD algorithm, an SVD-based decomposition algorithm that produces a tensor decomposition in a user-defined graph tensor format.

\subsection{Tensor Notation and Basic Tensor Operations}
Throughout, we denote tensors by calligraphic letters ($\mathcal{X}$) and matrices by bold capitals ($\mathbf{X}$). We describe here the two important tensor operations used throughout the paper: matricization, which reshapes a tensor into a matrix, and the $k$-mode product, which applies a matrix to a single mode of a tensor.

\paragraph{Matricization} Reshaping a tensor into a matrix amounts to merging 
several of its indices into
one. Given $i_k \in \{1,\ldots,n_k\}$ for $k = 1,\ldots,d$, we write the resulting
\emph{multi-index} as
\begin{equation}
    \overline{i_1i_2\ldots i_d} = 1 + \sum_{k=1}^d(i_k-1)\prod_{j=1}^{k-1}n_j ,
    \label{eq:multi-index}
\end{equation}
\noindent
which enumerates the merged indices in first-index-fastest order. Two reshapings of
$\mathcal{X}$ recur. The \emph{single $k$-mode unfolding}
$\mathbf{X}_{(k)} \in \mathbb{R}^{\,n_k \times \prod_{i \neq k} n_i}$ isolates one
mode against all the others,
\begin{equation}
    \mathcal{X}(i_1,\ldots,i_k,\ldots,i_d)
    = \mathbf{X}_{(k)}\big(i_k,\ \overline{i_1\ldots i_{k-1}i_{k+1}\ldots i_d}\big) ,
    \label{eq:single-unfold}
\end{equation}
\noindent
placing mode $k$ along the rows and the merged remaining modes along the columns.
The \emph{$k$-mode unfolding} $\mathbf{X}_{\langle k\rangle} \in
\mathbb{R}^{\,\prod_{i \le k} n_i \,\times\, \prod_{i > k} n_i}$ instead splits the
modes into a leading and a trailing multi-index,
\begin{equation}
    \mathcal{X}(i_1,\ldots,i_k,\ldots,i_d)
    = \mathbf{X}_{< k>}\big(\overline{i_1\ldots i_{k}},\ \overline{i_{k+1}\ldots i_d}\big) ,
    \label{eq:unfold}
\end{equation}
\noindent
separating the first $k$ modes of $\mathcal{X}$ from the remaining $d-k$.

\paragraph{$k$-mode product}  The \emph{$k$-mode product} applies a matrix along a single mode, leaving the others
untouched. For $\mathbf{M}\in\mathbb{R}^{m\times n_k}$, the $k$-mode product
$\mathcal{X}\times_{k}\mathbf{M} \in
\mathbb{R}^{\,n_1\times\cdots\times n_{k-1}\times m\times n_{k+1}\times\cdots\times n_d}$
is given by
\begin{equation}
    (\mathcal{X}\times_{k}\mathbf{M})(i_1,\ldots,i_{k-1},j,i_{k+1},\ldots,i_d)=
    \sum_{i_k=1}^{n_k}\mathcal{X}(i_1,\ldots,i_k,\ldots,i_d)\,\mathbf{M}(j,i_k) ,
    \label{eq:k-mode-product}
\end{equation}
\noindent
so that mode $k$ is contracted against the columns of $\mathbf{M}$ and replaced by its rows. Equivalently, $\mathcal{Y} = \mathcal{X}\times_{k}\mathbf{M}$ is the tensor whose unfolding satisfies $\mathbf{Y}_{(k)} = \mathbf{M}\,\mathbf{X}_{(k)}$.

\paragraph{Tensor contraction} Given $\mathcal{X} \in \mathbb{R}^{n_1 \times \cdots \times n_p}$ and $\mathcal{Y} \in \mathbb{R}^{m_1 \times \cdots \times m_q}$ sharing a common dimension along mode $k$ of $\mathcal{X}$ and mode $l$ of $\mathcal{X}$ (i.e., $n_k = m_l$), their contraction along this pair of modes is the tensor $\mathcal{Z} \in \mathbb{R}^{n_1 \times \cdots \times n_{k-1} \times n_{k+1} \times \cdots \times n_p \times m_1 \times \cdots \times m_{l-1} \times m_{l+1} \times \cdots \times m_q}$ defined as
\begin{equation}
    \mathcal{Z}(i_1,\ldots,i_{k-1},i_{k+1},\ldots,i_p,\,j_1,\ldots,j_{l-1},j_{l+1},\ldots,j_q)
    = \sum_{i_k=1}^{n_k} \mathcal{X}(i_1,\ldots,i_k,\ldots,i_p)\,\mathcal{Y}(j_1,\ldots,i_k,\ldots,j_q) ,
    \label{eq:contraction}
\end{equation}
\noindent
so that mode $k$ of $\mathcal{X}$ and mode $l$ of $\mathcal{Y}$ are summed away and the remaining modes of both tensors are concatenated into a single, higher-order tensor. The order of $\mathcal{Z}$ is thus $p + q - 2$, reduced by two for each pair of contracted modes. Equivalently, a contraction can be written as a matrix product between the matricizations of $\mathcal{X}$ and $\mathcal{Y}$ along the contracted modes, $\mathbf{Z} = \mathbf{X}_{(k)}^\top \mathbf{Y}_{(l)}$, reshaped back into a tensor.

\subsection{Tensor Network Representation as Graphs}

Let $\mathcal{X} \in \mathbb{R}^{n_1 \times \cdots \times n_d}$ be a $d$-dimensional tensor, and suppose $\mathcal{X}$ admits a low-rank approximation $\mathcal{X} \approx \tilde{\mathcal{X}}$, where $\tilde{\mathcal{X}}$ is built from the contraction of smaller factor tensors $\mathcal{G}_1, \ldots, \mathcal{G}_M$, $M \geq d$, called tensor \emph{cores}. 
The tensor $\tilde{\mathcal{X}}$ can be conveniently represented in terms of a {\em graph} whose nodes represent the cores $\mathcal{G}_j$ and whose edges represent indices that are summed over (contracted).
%
For example, the classical TT format \cite{oseledets2011} expresses elements of $\mathcal{X}$ as

\begin{equation}
    \tilde{\mathcal{X}}(i_1, \ldots, i_d)
    \;=\;
    \sum_{\alpha_1, \ldots, \alpha_{d-1}}
    \mathcal{G}_1(i_1, \alpha_1)\,
    \mathcal{G}_2(\alpha_1, i_2, \alpha_2)\,
    \cdots\,
    \mathcal{G}_d(\alpha_{d-1}, i_d),
    \label{eq:tt-form}
\end{equation}

\noindent
which corresponds to a path graph $P_d$. Closing the TT chain into a loop by adding a single edge between the first and last cores produces the TR format \cite{zhao2016}

\begin{equation}
    \tilde{\mathcal{X}}(i_1, \ldots, i_d)
    \;=\;
    \sum_{\alpha_0, \alpha_1, \ldots, \alpha_{d-1}}
    \mathcal{G}_1(\alpha_0, i_1, \alpha_1)\,
    \mathcal{G}_2(\alpha_1, i_2, \alpha_2)\,
    \cdots\,
    \mathcal{G}_d(\alpha_{d-1}, i_d, \alpha_0),
    \label{eq:tr-form}
\end{equation}

\noindent
which corresponds to a cycle graph $C_d$. The HT format \cite{grasedyck2010} index form representation can be depicted as a binary tree, the $d$ physical indices are attached as leaves, and each interior node carries a transfer tensor $\mathcal{B}_t$ that contracts the two child subtrees into a single edge:

\begin{equation}
    \mathcal{U}_t(i_{\mu(t)}, \alpha_t)
    \;=\;
    \sum_{\alpha_{t_\ell},\, \alpha_{t_r}}
    \mathcal{B}_t(\alpha_{t_\ell}, \alpha_{t_r}, \alpha_t)\,
    \mathcal{U}_{t_\ell}(i_{\mu(t_\ell)}, \alpha_{t_\ell})\,
    \mathcal{U}_{t_r}(i_{\mu(t_r)}, \alpha_{t_r}),
    \label{eq:ht-form}
\end{equation}

\noindent
with $\tilde{\mathcal{X}} = \mathcal{U}_{t_{\mathrm{root}}}$. The FCTN format \cite{zheng2021} places an edge between every pair of cores

\begin{equation}
    \tilde{\mathcal{X}}(i_1, \ldots, i_d)
    \;=\;
    \sum_{\{\alpha_{j,k}\}_{1 \le j < k \le d}}
    \prod_{k=1}^{d}
    \mathcal{G}_k\bigl(i_k,\; \{\alpha_{j,k}\}_{j \ne k}\bigr),
    \label{eq:fctn-form}
\end{equation}

\noindent
yielding the complete graph $K_d$. These tensor formats, i.e., TT, HT, TR, and FCTN, are depicted in Figure \ref{fig:gtn_formats_4d} for $d=4$. They are all particular instances of a broader category of \emph{graph tensor networks}, in which the cores and edge structure are arranged according to a graph. 
\begin{figure}[t]
    \centering
    \resizebox{\textwidth}{!}{%
    \begin{tikzpicture}[scale=1,>=latex]
    \begin{scope}[shift={(0,0)}]
        \foreach \i in {1,...,4} {
            \node[physnode] (TT\i) at ({\i-2.5}, 0.3) {};
        }
        \draw[tnbond] (TT1)--(TT2)--(TT3)--(TT4);         
        \foreach \i in {1,...,4} {                        
            \draw[tnleg] (TT\i) -- ++(0,-0.6) node[below,font=\scriptsize] {$i_{\i}$};
        }
    \end{scope}
    \begin{scope}[shift={(4,0)}]
        \def\rc{0.8}\def\rl{1.4}\def\rt{1.75}
        \foreach \i in {1,...,4} {
            \node[physnode] (TR\i) at ({45+90*(\i-1)}:\rc) {};
        }
        \draw[tnbond] (TR1)--(TR2)--(TR3)--(TR4)--(TR1);  
        \foreach \i in {1,...,4} {                        
            \draw[tnleg] (TR\i) -- ({45+90*(\i-1)}:\rl);
            \node[font=\scriptsize] at ({45+90*(\i-1)}:\rt) {$i_{\i}$};
        }
    \end{scope}
    \begin{scope}[shift={(8,0)}]
        \def\rc{0.85}\def\rl{1.5}\def\rt{1.85}
        \foreach \i in {1,...,4} {
            \node[physnode] (F\i) at ({45+90*(\i-1)}:\rc) {};
        }
        \draw[tnbond] (F1)--(F2)--(F3)--(F4)--(F1);       
        \draw[tnbond] (F1)--(F3); \draw[tnbond] (F2)--(F4); 
        \foreach \i in {1,...,4} {                        
            \draw[tnleg] (F\i) -- ({45+90*(\i-1)}:\rl);
            \node[font=\scriptsize] at ({45+90*(\i-1)}:\rt) {$i_{\i}$};
        }
    \end{scope}
    \begin{scope}[shift={(12,0)}]
        \node[physnode] (H1) at (-1.5,-0.6) {};           
        \node[physnode] (H2) at (-0.5,-0.6) {};
        \node[physnode] (H3) at ( 0.5,-0.6) {};
        \node[physnode] (H4) at ( 1.5,-0.6) {};
        \node[intnode]  (H5) at (-1, 0.4) {};             
        \node[intnode]  (H6) at ( 1, 0.4) {};
        \node[intnode]  (H7) at ( 0, 1.4) {};             
        \draw[tnbond] (H7)--(H5); \draw[tnbond] (H7)--(H6); 
        \draw[tnbond] (H5)--(H1); \draw[tnbond] (H5)--(H2);
        \draw[tnbond] (H6)--(H3); \draw[tnbond] (H6)--(H4);
        \foreach \i in {1,...,4} {                        
            \draw[tnleg] (H\i) -- ++(0,-0.6) node[below,font=\scriptsize] {$i_{\i}$};
        }
    \end{scope}
\end{tikzpicture}}
    \caption{Tensor train (TT), tensor ring (TR), fully-connected tensor network (FCTN), and hierarchical Tucker (HT) diagrams for $d=4$ (left to right). TT forms a path graph $P_4$, TR forms a cycle graph $C_4$, HT forms a binary tree, and FCTN forms a complete graph $K_4$.}
    \label{fig:gtn_formats_4d}
\end{figure}

To specify such a graph, we encode its topology in a \emph{rank adjacency matrix} $\mathbf{R}\in\mathbb{Z}^{p\times p}$, with $p\ge d$, whose entries record the physical mode sizes and the edge dimensions between every pair of cores, akin to an adjacency matrix used in graph theory. Specifically,  the rank adjacency matrix has entries 
\begin{equation}
    \mathbf{R}_{ij} =
    \begin{cases}
        n_i, & i = j, \\[2pt]
        r_{i,j} \quad \text{rank}(i,j) , & i \ne j, \text{ edge connects } i \text{ and } j. \\[2pt]
        1, & i \ne j, \text{ no edge connects } i \text{ and } j.
    \end{cases}
\end{equation}
The matrix is symmetric, as $ \mathbf{R}_{ij} =  \mathbf{R}_{ji}$. We write its diagonal entries as $ \mathbf{R}_{jj} = n_j$ for $j = 1, \ldots, p$, where $n_1, \ldots, n_d$ are the physical mode sizes of $\mathcal{X}$ and $n_j = 1$ for $d < j \leq p$, corresponding to internal nodes that are not associated with a physical index of the original tensor. In Figure~\ref{fig:tn_rm}, we provide examples of the rank adjacency matrix for three well-known tensor formats, namely FCTN, TT, and HT.

\begin{figure}[t]
    \centering
    \setlength{\tabcolsep}{10pt}\renewcommand{\arraystretch}{1.3}
\begin{tabular*}{0.75\textwidth}{@{\extracolsep{\fill}} l c c @{}}
\toprule
 & \textit{Tensor network} & \textit{Rank adjacency matrix} \\
\midrule
\textbf{FCTN} &
\begin{tikzpicture}[scale=0.6,>=latex, baseline={(current bounding box.center)}]
  \def\r{1.7}\def\leg{0.75}
  \foreach \i [evaluate=\i as \ang using 72*(\i-1)+90] in {1,...,5}{
    \node[physnode] (G\i) at (\ang:\r) {};
    \draw[tnleg] (G\i) -- (\ang:{\r+\leg});
    \node[font=\scriptsize] at (\ang:{\r+\leg+0.5}) {$i_{\i}$};
    \node[corelbl] at ({\ang+15}:{\r+0.42}) {$\mathcal{G}_{\i}$};   
  }
  \foreach \i in {1,...,5}{\foreach \j in {1,...,5}{\ifnum\i<\j \draw[tnbond] (G\i)--(G\j);\fi}}
\end{tikzpicture}
&
\begin{tikzpicture}[baseline={(R.center)}]
  \matrix (R) [matrix of math nodes, ampersand replacement=\&,
      left delimiter=(, right delimiter=),
      nodes={minimum size=0.6cm, inner sep=1pt, anchor=center, font=\footnotesize}]
  {
    n_1 \& r_{1,2} \& r_{1,3} \& r_{1,4} \& r_{1,5} \\
    r_{1,2} \& n_2 \& r_{2,3} \& r_{2,4} \& r_{2,5} \\
    r_{1,3} \& r_{2,3} \& n_3 \& r_{3,4} \& r_{3,5} \\
    r_{1,4} \& r_{2,4} \& r_{3,4} \& n_4 \& r_{4,5} \\
    r_{1,5} \& r_{2,5} \& r_{3,5} \& r_{4,5} \& n_5 \\
  };
  \foreach \k in {1,...,5}{
    \node[font=\scriptsize, left=10pt]  at (R-\k-1.west)  {$\mathcal{G}_{\k}$};
    \node[font=\scriptsize, above=1pt] at (R-1-\k.north) {$\mathcal{G}_{\k}$};}
\end{tikzpicture}
\\[6pt]
\textbf{TT} &
\begin{tikzpicture}[scale=0.6,>=latex, baseline={(current bounding box.center)}]
  \foreach \i in {1,...,4}{\node[physnode, label={[corelbl]above:$\mathcal{G}_{\i}$}] (G\i) at (1.4*\i,0) {};}
  \draw[tnbond] (G1)--(G2)--(G3)--(G4);
  \foreach \i in {1,...,4}{\draw[tnleg] (G\i) -- ++(0,-0.8) node[below,font=\scriptsize]{$i_{\i}$};}
\end{tikzpicture}
&
\begin{tikzpicture}[baseline={(R.center)}]
  \matrix (R) [matrix of math nodes, ampersand replacement=\&,
      left delimiter=(, right delimiter=),
      nodes={minimum size=0.5cm, inner sep=1pt, anchor=center, font=\footnotesize}]
  {
    n_1 \& r_{1,2} \& 1 \& 1 \\
    r_{1,2} \& n_2 \& r_{2,3} \& 1 \\
    1 \& r_{2,3} \& n_3 \& r_{3,4} \\
    1 \& 1 \& r_{3,4} \& n_4 \\
  };
  \foreach \k in {1,...,4}{
    \node[font=\scriptsize, left=10pt]  at (R-\k-1.west)  {$\mathcal{G}_{\k}$};
    \node[font=\scriptsize, above=1pt] at (R-1-\k.north) {$\mathcal{G}_{\k}$};}
\end{tikzpicture}
\vspace{0.5cm}
\\[6pt]
\textbf{HT} &
\begin{tikzpicture}[scale=0.55,>=latex, baseline={(current bounding box.center)}]
  \node[physnode, label={[corelbl]left:$\mathcal{G}_1$}] (G1) at (0,0) {};
  \node[physnode, label={[corelbl]left:$\mathcal{G}_2$}] (G2) at (1.5,0) {};
  \node[physnode, label={[corelbl]left:$\mathcal{G}_3$}] (G3) at (3,0) {};
  \node[physnode, label={[corelbl]left:$\mathcal{G}_4$}] (G4) at (4.5,0) {};
  \node[intnode,  label={[corelbl]left:$\mathcal{G}_5$}] (G5) at (0.75,1.3) {};
  \node[intnode,  label={[corelbl]right:$\mathcal{G}_6$}] (G6) at (3.75,1.3) {};
  \node[intnode,  label={[corelbl]above:$\mathcal{G}_7$}] (G7) at (2.25,2.6) {};
  \draw[tnbond] (G7)--(G5); \draw[tnbond] (G7)--(G6);
  \draw[tnbond] (G5)--(G1); \draw[tnbond] (G5)--(G2);
  \draw[tnbond] (G6)--(G3); \draw[tnbond] (G6)--(G4);
  \foreach \i in {1,...,4}{\draw[tnleg] (G\i) -- ++(0,-0.8) node[below,font=\scriptsize]{$i_{\i}$};}
\end{tikzpicture}
&
\begin{tikzpicture}[baseline={(R.center)}]
  \matrix (R) [matrix of math nodes, ampersand replacement=\&,
      left delimiter=(, right delimiter=),
      nodes={minimum size=0.5cm, inner sep=1pt, anchor=center, font=\scriptsize}]
  {
    n_1 \& 1 \& 1 \& 1 \& r_{1,5} \& 1 \& 1 \\
    1 \& n_2 \& 1 \& 1 \& r_{2,5} \& 1 \& 1 \\
    1 \& 1 \& n_3 \& 1 \& 1 \& r_{3,6} \& 1 \\
    1 \& 1 \& 1 \& n_4 \& 1 \& r_{4,6} \& 1 \\
    r_{1,5} \& r_{2,5} \& 1 \& 1 \& 1 \& 1 \& r_{5,7} \\
    1 \& 1 \& r_{3,6} \& r_{4,6} \& 1 \& 1 \& r_{6,7} \\
    1 \& 1 \& 1 \& 1 \& r_{5,7} \& r_{6,7} \& 1 \\
  };
  \foreach \k in {1,...,7}{
    \node[font=\scriptsize, left=10pt]  at (R-\k-1.west)  {$\mathcal{G}_{\k}$};
    \node[font=\scriptsize, above=1pt] at (R-1-\k.north) {$\mathcal{G}_{\k}$};}
\end{tikzpicture}
\\
\bottomrule
\end{tabular*}
\caption{Examples of well-known tensor formats expressed as graphs (left) and rank adjacency matrices (right). For FCTN, the graph is the complete graph $K_d$, with the corresponding matrix fully populated with ranks. For TT, the graph is the path graph $P_d$, with only the superdiagonal and subdiagonal entries populated with ranks. For HT, the graph is a balanced binary tree, requiring internal cores $\mathcal{G}_5, \mathcal{G}_6, \mathcal{G}_7$.}
    \label{fig:tn_rm}
\end{figure}

To describe the main idea underpinning our GTN-SVD algorithm, we begin with a simple observation: any simple graph on $p$ vertices is a subgraph of the complete graph $K_p$. The FCTN format is precisely the tensor network realization of $K_p$, where each of its $p$ cores is associated with an index and every pair of cores shares an edge. An arbitrary GTN can then be recovered from this complete network by trivializing its absent edges.

\subsection{GTN-SVD Algorithm}

To describe the main idea of our GTN-SVD algorithm, we begin with the observation that any simple graph on $p$ vertices is a subgraph of the complete graph $K_p$. The FCTN format is precisely the tensor network realization of $K_p$, where each of its $p$ cores is associated with an index and every pair of cores shares an edge. An arbitrary GTN can then be recovered from this complete network by {\em trivializing its absent edges}.
To this end, let $\mathbf{R} \in \mathbb{N}^{p \times p}$ be the rank adjacency matrix, with $\mathbf{R}_{jj} = n_j$ the size of the $j$-th physical mode and $\mathbf{R}_{ij} = r_{i,j}$, $i \neq j$, the size of the edge between cores $i$ and $j$. Since indexing over a singleton dimension acts as the identity, setting $\mathbf{R}_{ij} = 1$ deletes the edge $\{i,j\}$ and setting $n_j = 1$ removes the physical index of core $j$, neither of which alters the original tensor (see Figure~\ref{fig:tn_rm}).  
As a consequence, every graph tensor network on $p$ cores is isomorphic to an FCTN on $p$ cores whose corresponding entries of $\mathbf{R}$ equal one. Figure~\ref{fig:gtn_iso} illustrates this for a 4D tensor represented on $p = 6$ cores, two of which are internal ($n_5 = n_6 = 1$). The singleton ranks delete the missing edges, recovering an FCTN on six cores.
\begin{figure}[t]
  \centering
  \resizebox{1\textwidth}{!}{\tikzset{
  trivbond/.style={line width=0.9pt, edgered, dotted},
  bondlbl/.style={fill=white, inner sep=1pt, font=\scriptsize},
}
\begin{tabular}{@{}c@{\hspace{9mm}}c@{\hspace{9mm}}c@{}}
\begin{tikzpicture}[baseline={(R.center)}]
  \matrix (R) [matrix of math nodes, ampersand replacement=\&,
      left delimiter=(, right delimiter=),
      nodes={minimum size=0.6cm, inner sep=1pt, anchor=center, font=\footnotesize}]
  {
    n_1 \& r_{1,2} \& 1 \& 1 \& r_{1,5} \& 1 \\
    r_{1,2} \& n_2 \& 1 \& 1 \& r_{2,5} \& 1 \\
    1 \& 1 \& n_3 \& r_{3,4} \& 1 \& r_{3,6} \\
    1 \& 1 \& r_{3,4} \& n_4 \& 1 \& r_{4,6} \\
    r_{1,5} \& r_{2,5} \& 1 \& 1 \& 1 \& r_{5,6} \\
    1 \& 1 \& r_{3,6} \& r_{4,6} \& r_{5,6} \& 1 \\
  };
  \begin{scope}[on background layer]
    \foreach \i in {1,...,6}{\foreach \j in {1,...,6}{
      \fill[hlred] (R-\i-\j.north west) rectangle (R-\i-\j.south east);}}
    \foreach \i/\j in {1/1,1/2,1/5, 2/1,2/2,2/5, 3/3,3/4,3/6,
                       4/3,4/4,4/6, 5/1,5/2,5/6, 6/3,6/4,6/5}{
      \fill[hlblue] (R-\i-\j.north west) rectangle (R-\i-\j.south east);}
  \end{scope}
  \foreach \k in {1,...,6}{
    \node[font=\scriptsize, left=10pt]  at (R-\k-1.west)  {$\mathcal{G}_{\k}$};
    \node[font=\scriptsize, above=1pt] at (R-1-\k.north) {$\mathcal{G}_{\k}$};}
\end{tikzpicture}
&
\begin{tikzpicture}[scale=1.6,>=latex, baseline={(current bounding box.center)}]
  \def\rc{1}\def\rl{1.38}\def\rt{1.6}
  \foreach \i [evaluate=\i as \ang using 90-60*(\i-1)] in {1,...,6}{
    \coordinate (P\i) at (\ang:\rc);}
  \foreach \a/\b in {1/3,1/4,1/6,2/3,2/4,2/6,3/5,4/5}{\draw[trivbond] (P\a)--(P\b);}
  \foreach \a/\b in {1/2,1/5,2/5,3/4,3/6,4/6,5/6}{\draw[tnbond] (P\a)--(P\b);}
  \foreach \i [evaluate=\i as \ang using 90-60*(\i-1)] in {1,2,3,4}{
    \draw[tnleg] (P\i) -- (\ang:\rl);
    \node[font=\scriptsize] at (\ang:\rt) {$i_{\i}$};}
  \foreach \i [evaluate=\i as \ang using 90-60*(\i-1)] in {5,6}{
    \draw[trivbond] (P\i) -- (\ang:\rl);
    \node[font=\scriptsize] at (\ang:\rt) {$i_{\i}$};}
  \path (P1)--(P2) node[bondlbl,pos=0.50]{$r_{1,2}$};
  \path (P3)--(P4) node[bondlbl,pos=0.50]{$r_{3,4}$};
  \path (P5)--(P6) node[bondlbl,pos=0.50]{$r_{5,6}$};
  \path (P1)--(P5) node[bondlbl,pos=0.32]{$r_{1,5}$};
  \path (P4)--(P6) node[bondlbl,pos=0.32]{$r_{4,6}$};
  \path (P2)--(P5) node[bondlbl,pos=0.30]{$r_{2,5}$};
  \path (P3)--(P6) node[bondlbl,pos=0.30]{$r_{3,6}$};
  \foreach \i in {1,2,3,4}{\node[physnode] at (P\i) {};}
  \foreach \i in {5,6}{\node[intnode] at (P\i) {};}
\end{tikzpicture}
&
\begin{tikzpicture}[scale=0.95,>=latex, baseline={(current bounding box.center)}]
  \coordinate (N1) at ( 1, 2.2); \coordinate (N2) at (-1, 2.2);
  \coordinate (N5) at ( 0, 0.8); \coordinate (N6) at ( 0,-0.8);
  \coordinate (N3) at ( 1,-2.2); \coordinate (N4) at (-1,-2.2);
  \draw[tnbond] (N1)--(N2); \draw[tnbond] (N1)--(N5); \draw[tnbond] (N2)--(N5);
  \draw[tnbond] (N5)--(N6);
  \draw[tnbond] (N3)--(N6); \draw[tnbond] (N4)--(N6); \draw[tnbond] (N3)--(N4);
  \draw[tnleg] (N1) -- ++(45:0.75);  \node[font=\scriptsize] at ($(N1)+(45:1.12)$)  {$i_1$};
  \draw[tnleg] (N2) -- ++(135:0.75); \node[font=\scriptsize] at ($(N2)+(135:1.12)$) {$i_2$};
  \draw[tnleg] (N3) -- ++(315:0.75); \node[font=\scriptsize] at ($(N3)+(315:1.12)$) {$i_3$};
  \draw[tnleg] (N4) -- ++(225:0.75); \node[font=\scriptsize] at ($(N4)+(225:1.12)$) {$i_4$};
  \path (N1)--(N2) node[bondlbl,pos=0.5]{$r_{1,2}$};
  \path (N1)--(N5) node[bondlbl,pos=0.5]{$r_{1,5}$};
  \path (N2)--(N5) node[bondlbl,pos=0.5]{$r_{2,5}$};
  \path (N5)--(N6) node[bondlbl,pos=0.5]{$r_{5,6}$};
  \path (N3)--(N6) node[bondlbl,pos=0.5]{$r_{3,6}$};
  \path (N4)--(N6) node[bondlbl,pos=0.5]{$r_{4,6}$};
  \path (N3)--(N4) node[bondlbl,pos=0.5]{$r_{3,4}$};
  \node[physnode] at (N1) {}; \node[physnode] at (N2) {};
  \node[physnode] at (N3) {}; \node[physnode] at (N4) {};
  \node[intnode]  at (N5) {}; \node[intnode]  at (N6) {};
\end{tikzpicture}
\\[3mm]
{\normalsize (a)} & {\normalsize (b)} & {\normalsize (c)}
\end{tabular}}  
    \caption{(a) Rank adjacency matrix for a graph tensor network. Red entries equal to one mark either a disconnection between cores (off-diagonal) or an internal core with no physical index (diagonal). (b) Equivalent FCTN representation from rank adjacency matrix (a), in which the red dotted edges correspond to the unit entries and the blue solid lines correspond to a nontrivial edge rank or a physical index. (c) GTN  isomorphic to the FCTN depicted in (b).}
  \label{fig:gtn_iso}
\end{figure}
The isomorphism between GTN and FCTN allows us to extend the FCTN-SVD algorithm, originally proposed by Wang and Li \cite{wangFCTN2024}, to SVD-based decompositions of GTNs corresponding to any user-defined rank adjacency matrix $\mathbf{R}$. 
The new algorithm, which we call {GTN-SVD} (see Algorithm~\ref{alg:gtn-svd}), takes in a full tensor $\mathcal{X}$, a user-prescribed $p\times p$ rank adjacency matrix $\mathbf{R}$, and a prescribed error tolerance $\varepsilon$ to produce tensor cores $\{\mathcal{G}_1,\ldots,\mathcal{G}_p\}$ and, correspondingly, a GTN tensor $\tilde{\mathcal{X}}$ satisfying
\begin{equation}
\|\mathcal{X}-\tilde{\mathcal{X}}\|_F\le\varepsilon\,\|\mathcal{X}\|_F.
\label{epsilonX}
\end{equation}
The algorithm is based on the following error bound, 
which follows from the results in \cite{wangFCTN2024}.

\begin{theorem}[GTN-SVD error bound]
Let ${\mathcal X}\in \mathbb{R}^{n_1\times \cdots \times n_d}$ be a given 
tensor and $\tilde{\mathcal X}$ its GTN approximation using $p\geq d$ cores 
$\{\mathcal G_1,\ldots, \mathcal{G}_p\}$. Suppose that 
the $k$-unfoldings $\mathbf{X}_{<k>}$ of 
the tensor $\mathcal X$ satisfy
\begin{equation}
\mathbf{X}_{< k>}=\mathbf{A}_k+\mathbf{E}_k,\qquad 
\operatorname{rank}\left(\mathbf{A}_k\right) = r_k,\qquad
\|\mathbf{E}_k\|_F=\varepsilon_k,\qquad k=1,\dots,p-1,
\end{equation}
and that the outgoing edges of core $k$ are chosen so that
\begin{equation}
\prod_{\substack{j>k \\ \mathbf{R}_{kj}\neq 1}} \mathbf{R}_{kj} = r_k.
\end{equation}
Then GTN-SVD returns a tensor
$\tilde{\mathcal X}$ in the graph tensor network format prescribed by the rank
adjacency matrix $\mathbf{R}$, with
\begin{equation}
\big\|\mathcal X-\tilde{\mathcal X}\big\|_F
\le\sqrt{\sum_{k=1}^{p-1}\varepsilon_k^{2}}.
\label{p1}
\end{equation}
Moreover, if each SVD in Algorithm~\ref{alg:gtn-svd} is truncated at
\begin{equation}
\delta_k=\sqrt{\frac{2(p-k)}{p(p-1)}}\,\varepsilon\,\|\mathcal X\|_F, 
\label{delta}
\end{equation}
where $\varepsilon$ is a prescribed tolerance, then the error bound \eqref{epsilonX} holds.
\end{theorem}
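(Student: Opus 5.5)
The argument mirrors the TT-SVD error analysis of Oseledets, transplanted to the GTN setting through the GTN--FCTN isomorphism, so that the FCTN-SVD estimates of Wang and Li apply. We view the GTN prescribed by $\mathbf{R}$ as an FCTN on $p$ cores with unit ranks on absent edges and unit physical dimensions on internal cores. Algorithm~\ref{alg:gtn-svd} then proceeds sequentially: at step $k$ it takes the current remainder tensor, which carries the physical index $i_k$, the already-created incoming edges $\{\alpha_{jk}\}_{j<k}$, and the remaining indices. It forms the unfolding that separates $(i_k,\{\alpha_{jk}\}_{j<k})$ from everything else and computes a truncated SVD of rank $r_k$. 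The left factor is reshaped into $\mathcal{G}_k$, with its outgoing edge index split as $\prod_{j>k,\mathbf{R}_{kj}\ne1}\mathbf{R}_{kj}=r_k$. The right factor $\Sigma V^\top$ becomes the next remainder, with the outgoing edges redistributed to the future cores $j>k$.

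\textbf{Step 1: the truncation error at step $k$.} We show that the error committed at step $k$ is at most $\varepsilon_k$. The step-$k$ remainder is obtained from $\mathcal{X}$ by applying the transposed left factors built so far, and these have orthonormal columns. Following the TT-SVD proof, we write the previous projections as a product $\mathbf{Q}_{k-1}^\top$ with orthonormal columns acting on the row side of $\mathbf{X}_{<k>}$. The step-$k$ unfolding is then, up to a permutation of indices, $\mathbf{Q}_{k-1}^\top\mathbf{X}_{<k>}$ reshaped, because the incoming edges together with $i_k$ carry exactly the first-$k$-modes information. Its best rank-$r_k$ approximation error is therefore at most
\[
\|\mathbf{Q}_{k-1}^\top \mathbf{E}_k\|_F\le\varepsilon_k .
\]
This uses only the observation that $\mathbf{Q}_{k-1}^\top\mathbf{A}_k$ has rank at most $r_k$.

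\textbf{Step 2: Pythagorean accumulation.} Let $\mathbf{P}_k$ be the orthogonal projector determined by the step-$k$ left singular vectors, lifted to the full space. The total error telescopes as
\[
\mathcal{X}-\tilde{\mathcal{X}}=\sum_k (\mathbf{I}-\mathbf{P}_k)\,\mathbf{P}_{k-1}\cdots\mathbf{P}_1\mathcal{X}.
\]
The $k$-th term lies in the range of $\mathbf{P}_1\cdots\mathbf{P}_{k-1}(\mathbf{I}-\mathbf{P}_k)$. These ranges are mutually orthogonal, because each later term lies in the range of the earlier projectors while the $k$-th term is annihilated by $\mathbf{P}_k$. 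Hence the squared norms add, and \eqref{p1} follows.

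\textbf{Main obstacle.} The delicate point is Step 1. Unlike in TT, the row index at step $k$ mixes the physical index $i_k$ with edge indices coming from several nonadjacent earlier cores. We must verify that these edges, after the ordering permutation, span precisely a subspace of the row space of $\mathbf{X}_{<k>}$ through an isometry, so that the rank bound on $\mathbf{A}_k$ transfers. We plan to establish this by induction on $k$. The inductive claim is that the contraction of $\mathcal{G}_1,\dots,\mathcal{G}_{k-1}$ along all their mutual edges, with the dangling edges to cores $\ge k$ left open, is a matrix with orthonormal columns from $\overline{i_1\cdots i_{k-1}}$ to the open edges. This holds because each $\mathcal{G}_j$ is a reshaped left-orthonormal SVD factor, and contracting orthonormal factors preserves orthonormality. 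This is exactly the lemma underlying FCTN-SVD in \cite{wangFCTN2024}, and unit ranks on absent edges leave it unchanged. Internal cores with $n_j=1$ introduce no additional difficulty.

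\textbf{The tolerance statement.} Choose $\varepsilon_k=\delta_k$, which is legitimate since truncation at $\delta_k$ yields rank-$r_k$ splittings with $\|\mathbf{E}_k\|_F\le\delta_k$. Then
\[
\sum_{k=1}^{p-1}\delta_k^2=\frac{2\,\varepsilon^2\|\mathcal{X}\|_F^2}{p(p-1)}\sum_{k=1}^{p-1}(p-k)=\varepsilon^2\|\mathcal{X}\|_F^2,
\]
since $\sum_{k=1}^{p-1}(p-k)=p(p-1)/2$. Combined with \eqref{p1}, this gives \eqref{epsilonX}.
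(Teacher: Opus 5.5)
Your isometry lemma and your Step 2 are correct, but Step 1 fails, and it fails exactly where GTN differs from TT. The sentence ``the incoming edges together with $i_k$ carry exactly the first-$k$-modes information'' is not true. The contraction $\mathbf{Q}_{k-1}$ of $\mathcal{G}_1,\dots,\mathcal{G}_{k-1}$ maps $\overline{i_1\cdots i_{k-1}}$ isometrically onto \emph{all} open edges: $\beta_k=\{\alpha_{jk}\}_{j<k}$ together with the crossing edges $\beta_>=\{\alpha_{jl}\}_{j<k<l}$. So $(\mathbf{Q}_{k-1}\otimes\mathbf{I}_{n_k})^\top\mathbf{X}_{<k>}$ has rows $(\beta_k,\beta_>,i_k)$ and columns $(i_{k+1},\dots,i_p)$, and the rank bound $r_k$ transfers from $\mathbf{A}_k$ only in that matricization. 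The algorithm instead truncates the matrix with rows $(\beta_k,i_k)$ and columns $(\beta_>,i_{k+1},\dots,i_p)$. Moving $\beta_>$ across is not a permutation of rows and columns, and it does not preserve rank.

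This cannot be repaired, because the claimed bound can fail as soon as an edge crosses a step. Take $p=d=3$, $n_1=n_2=3$, $n_3=2$, the path $\mathcal{G}_1$--$\mathcal{G}_3$--$\mathcal{G}_2$ ($\mathbf{R}_{12}=1$, $\mathbf{R}_{13}=3$, $\mathbf{R}_{23}=2$), and $\mathcal{X}(\cdot,\cdot,1)=\mathbf{I}_3$, $\mathcal{X}(\cdot,\cdot,2)=e_1e_1^\top$.
\begin{itemize}
\item Then $\operatorname{rank}\mathbf{X}_{<1>}=3$ and $\operatorname{rank}\mathbf{X}_{<2>}=2$. Every hypothesis holds with $r_1=\mathbf{R}_{13}=3$, $r_2=\mathbf{R}_{23}=2$ and $\varepsilon_1=\varepsilon_2=0$, so the bound asserts exact recovery.
\item Step 1 applies an orthogonal $3\times 3$ factor to $i_1$ and commits no error.
\item Step 2 then truncates to rank $2$ the $3\times 6$ matrix with rows $i_2$ and columns $(\alpha_{13},i_3)$. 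Its Gram matrix is $\mathbf{X}_{(2)}\mathbf{X}_{(2)}^\top=\operatorname{diag}(2,1,1)$, so it discards a singular value equal to $1$, and the output has error $1$.
\item No tensor in this format can do better, since its mode-$2$ unfolding has rank at most $r_{23}=2$.
\end{itemize}
The paper's proof is the same Oseledets-type argument, organized as an induction on $p$ rather than your unrolled telescoping, and it has the identical hole. Its ``not difficult to see'' claim about the unfoldings of $\mathcal{B}_1$ is your Step 1, and it fails on this example. The bound in terms of $\mathbf{X}_{<k>}$ is guaranteed only when no edge $(j,l)$ with $j<k<l$ exists, i.e.\ for a path processed in its natural order (TT-SVD).

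What your argument does establish, and what the tolerance statement actually needs, is the bound in terms of the algorithm's own residuals. Let $\eta_k$ be the Frobenius norm of the part discarded by the SVD computed at step $k$. Your isometry lemma makes the lifted projectors nested, so the $k$-th telescoping term has norm exactly $\eta_k$, and $\|\mathcal{X}-\tilde{\mathcal{X}}\|_F=(\sum_{k=1}^{p-1}\eta_k^2)^{1/2}$. Truncation at $\delta_k$ gives $\eta_k\le\delta_k$, and your computation $\sum_k\delta_k^2=\varepsilon^2\|\mathcal{X}\|_F^2$ finishes the proof. Argue it this way rather than by ``choosing $\varepsilon_k=\delta_k$''. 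That phrasing again identifies the step-$k$ residual with the distance of $\mathbf{X}_{<k>}$ to rank $r_k$, which is the identification that breaks down.
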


\begin{proof}
The proof is by induction on $p$. For $p=2$ the statement follows from the properties of the SVD. Consider an arbitrary $p>2$. The first unfolding $\mathbf{X}_{<1>}$ is decomposed as
\[
\mathbf{X}_{<1>}=\mathbf{U}_1\mathbf{\Sigma}_1\mathbf{V}_1^{\top}+\mathbf{E}_1=\mathbf{U}_1\mathbf{B}_1+\mathbf{E}_1,
\]
\noindent
where $\mathbf{U}_1$ is of size $n_1\times r_1$, has orthonormal columns, and $\|\mathbf{E}_1\|_F=\varepsilon_1$. The first core $\mathcal G_1$ is obtained by reshaping the $r_1$ columns of $\mathbf{U}_1$ onto the present outgoing edges $\{j>1:R_{1j}\neq1\}$ of core $1$; as this re-indexes the columns, it leaves $\mathbf{U}_1$ unchanged as a linear map. The matrix $\mathbf{B}_1$ is associated with a $(p-1)$-core tensor $\mathcal B_1$, which GTN-SVD decomposes further; that is, $\mathbf{B}_1$ is approximated by some matrix $\widehat{\mathbf{B}}_1$. From the properties of the SVD it follows that $\mathbf{U}_1^{\top}\mathbf{E}_1=0$, and thus
\[
\|\mathcal X-\tilde{\mathcal X}\|_F^{2}
=\|\mathbf{X}_{<1>}-\mathbf{U}_1\widehat{\mathbf{B}}_1\|_F^{2}
=\|\mathbf{X}_{<1>}-\mathbf{U}_1\mathbf{B}_1\|_F^{2}+\|\mathbf{U}_1(\mathbf{B}_1-\widehat{\mathbf{B}}_1)\|_F^{2},
\]
and since $\mathbf{U}_1$ has orthonormal columns,
\[
\|\mathcal X-\tilde{\mathcal X}\|_F^{2}
\le\varepsilon_1^{2}+\|\mathbf{B}_1-\widehat{\mathbf{B}}_1\|_F^{2}.
\]
The matrix $\mathbf{B}_1$ is expressed from $\mathbf{X}_{\langle1\rangle}$ as $\mathbf{B}_1=\mathbf{U}_1^{\top}\mathbf{X}_{\langle1\rangle}$, and from the orthonormality of the columns of $\mathbf{U}_1$ it is not difficult to see that the distance of the $k$-th unfolding ($k=2,\dots,p-1$) of the $(p-1)$-core tensor $\mathcal B_1$ to the rank-$r_k$ matrix cannot be larger than $\varepsilon_k$. Proceeding by induction, 
\[
\|\mathbf{B}_1-\widehat{\mathbf{B}}_1\|_F^{2}\le\sum_{k=2}^{p-1}\varepsilon_k^{2},
\]
which together with the previous inequality gives \eqref{p1}. If each SVD is truncated at $\delta_k$ as in \eqref{delta}, then $\varepsilon_k\le\delta_k$ and
\[
\sum_{k=1}^{p-1}\delta_k^{2}
=\varepsilon^{2}\|\mathcal X\|_F^{2}\,\frac{2}{p(p-1)}\sum_{k=1}^{p-1}(p-k)
=\varepsilon^{2}\|\mathcal X\|_F^{2},
\]
so that $\|\mathcal X-\tilde{\mathcal X}\|_F\le\varepsilon\,\|\mathcal X\|_F$.

\end{proof}

\begin{algorithm}[t]
\caption{\textsc{GTN-SVD}: decomposing a full tensor to GTN cores}
\label{alg:gtn-svd}
\begin{algorithmic}[1]
\Require Dense tensor $\mathcal{X}\in\mathbb{R}^{n_1\times \ldots\times n_d}$; rank adjancency matrix $\mathbf{R}$; prescribed relative error $\varepsilon$
\Ensure  Tensor cores $\{\mathcal{G}_1,\ldots,\mathcal{G}_p \}$ of a GTN 
$\tilde{\mathcal{X}}$  such that $||\mathcal{X} - \tilde{\mathcal{X}}||_F\leq \varepsilon||\mathcal{X}||_F$

\State Compute truncation parameters for $k=1,\ldots,p-1$
\begin{equation*}
    \delta_k \gets \dfrac{\sqrt{2(d-k)}}{\sqrt{d(d-1)}}||\varepsilon \mathcal{X} ||_F
\end{equation*}

\State Compute low-rank approximation via $\delta_1$-truncated SVD: $\mathbf{X}_{<1>}=\mathbf{U}_1\mathbf{\Sigma}_1\mathbf{V}_1^T + \mathbf{E}_1$

\State Compute ranks $[r_{1,2}, r_{1,3},\ldots,r_{1,p}]$ via \textsc{SplitRankRule}~\cite{wangFCTN2024} using $\text{rank}(\mathbf{U_1})$ and $\mathbf{R}[1,:]$

\State $\mathcal{G}_1 \gets \operatorname{Reshape}(\mathbf{U}_1,[n_1,r_{1,2},r_{1,3},...,r_{1,p}])$

\State Temporary tensor $\mathcal{C} \gets \operatorname{Reshape} (\mathbf{\Sigma}_1\mathbf{V}_1^T, [r_{1,2},r_{1,3},\ldots,r_{1,p},n_2,n_3,\ldots,n_p])$

\State Permute $\mathcal{C}$ such that the new axis order is 
\begin{equation*}
    [r_{1,2}, n_2, r_{1,3}, n_3, \ldots, r_{1,p}, n_p]
\end{equation*}

\For{$k = 2$ to $p-1$}
    \State $\mathbf{C}\gets \operatorname{Reshape}(\mathcal{C}, [n_k\prod^{k-1}_ir_{i,k}, \quad (\prod^{p}_{j=k+1} n_j)(\prod^{k-1}_{i}\prod^{p}_{j=k+1} n_jr_{i,j})])$

    \State Compute low-rank approximation via $\delta_k$-truncated SVD: $\mathbf{C}_{<k>}=\mathbf{U}\mathbf{\Sigma}\mathbf{V}^T + \mathbf{E}_k$

    \State Compute ranks $[r_{k,k+1}, r_{k,k+2},\ldots,r_{k,p}]$ via \textsc{SplitRankRule}~\cite{wangFCTN2024} using $\text{rank}(\mathbf{U}_1)$ and $\mathbf{R}[k,:]$

    \State $\mathcal{G}_k \gets \operatorname{Reshape}(\mathbf{U}_k,[r_{1,k}, r_{2,k},\ldots,r_{k-1,k}, n_k,])$

    \State $\mathcal{C} \gets \operatorname{Reshape}\bigl(\mathbf{\Sigma}_k \mathbf{V}_k^{T},\;
           [\,r_{k,k+1}, r_{k,k+2}, \ldots, r_{k,p},
           \Big(\textstyle\prod_{i=1}^{k-1} r_{i,k+1}\Big), n_{k+1}, \ldots, \Big(\textstyle\prod_{i=1}^{k-1} r_{i,p}\Big), n_p\,]\bigr)$
    
    \State Permute $\mathcal{C}$ such that the new axis order is
    \begin{equation*}
        \Big[\;
        \Big(\textstyle\prod_{i=1}^{k-1} r_{i,k+1}\Big),\, r_{k,k+1},\, n_{k+1},\;
        \ldots,\;
        \Big(\textstyle\prod_{i=1}^{k-1} r_{i,p}\Big),\, r_{k,p},\, n_{p}
        \;\Big]
    \end{equation*}

\EndFor

\State $\mathcal{G}_p\gets\operatorname{Reshape}(\mathcal{C}, [r_{1,p}, r_{2,p}, \ldots, r_{p-1,p}, n_p])$

\end{algorithmic}
\end{algorithm}

Let us make a couple remarks to the proposed GTN-SVD Algorithm~\ref{alg:gtn-svd}. First we address the rank assignment strategy. We use the rank adjacency matrix $\mathbf{R}$ both to declare the network topology and to mark which edges the algorithm should assign ranks to. Our implementation uses $r_{k,j}=-1$ to indicate/initialize which edges are assigned ranks. The particular value $-1$ is immaterial: any marker that cannot be confused with an edge label works, that is, any indicator that is not $1$ or above. At each step the columns of the orthonormal factor are reshaped onto the edges carrying this indicator.
Next, we address the rank distribution strategy. When a core carries more than one edge flagged for assignment, the rank $r_k$ returned by the truncated SVD must be split among them. We follow the \emph{Split Rank Rule} (SRR) of Wang and Li~\cite{wangFCTN2024} (Algorithm~2 of FCTN-SVD), which factors $r_k$ into prime factors and distributes them across the flagged edges. We summarize SRR in Algorithm~\ref{alg:srr}. We emphasize that the SRR is a heuristic: it is not known to be an optimal way to allocate $r_k$, and other distributions of the rank among the edges may yield smaller representations or better accuracy. We simply adopt it for ease of implementation. This algorithm can be modified to prescribe ranks instead of assignment to adapt to a prescribed error tolerance, as long as each row $m=1,\ldots,p$ of the rank adjacency matrix suffices the following to ensure the truncation per iteration possible:
\begin{equation}
    \prod_{k=1}^{m}r_{m,k}\geq \prod_{l=m+1}^{p}r_{m,l}.
\end{equation}
\noindent
This constraint also leads us to believe the best organization of a rank adjacency matrix is to decompose all physical modes of the full tensor before decomposing the internal cores. Refer to the HT rank adjacency matrix in Figure~\ref{fig:tn_rm} for an example.

\begin{algorithm}[t]
\caption{\textsc{SplitRankRule} (SRR)~\cite{wangFCTN2024}: distribute a rank among outgoing bonds}
\label{alg:srr}
\begin{algorithmic}[1]
\Require Rank $r_k$ to be split; row $\mathbf{R}[k,:]$ identifying the $m$ outgoing bonds of core $k$ flagged for sizing
\Ensure Vector of split ranks $\mathbf{s} = [\,r_{k,k+1}, \ldots, r_{k,p}\,]$ for those bonds, with all other bonds set to $1$
\State Initialize $\mathbf{s} \gets \operatorname{ones}(1{:}m)$
\State Compute the prime factors of $r_k$ and sort them in descending order into a vector $\mathbf{a}$
\State $\mathrm{dim} \gets \operatorname{length}(\mathbf{a})$
\If{$\mathrm{dim} \leq m$}
    \State $\mathbf{s}(1{:}\mathrm{dim}) \gets \mathbf{a}$
\Else
    \State $\mathbf{s} \gets \mathbf{a}(1{:}m)$
    \State $\mathbf{a} \gets \operatorname{flip}\!\big(\mathbf{a}(m{+}1{:}\mathrm{dim})\big)$
    \State $\mathrm{dim} \gets \operatorname{length}(\mathbf{a})$
    \While{$\mathrm{dim} > m$}
        \State $\mathbf{s} \gets \mathbf{s} \odot \mathbf{a}(1{:}m)$
            \Comment{$\odot$ : Hadamard (elementwise) product}
        \State $\mathbf{a} \gets \mathbf{a}(m{+}1{:}\mathrm{dim})$
        \State $\mathrm{dim} \gets \operatorname{length}(\mathbf{a})$
    \EndWhile
    \State $\mathbf{s}(m{-}\mathrm{dim}{+}1{:}m) \gets \mathbf{s}(m{-}\mathrm{dim}{+}1{:}m) \odot \mathbf{a}$
\EndIf
\end{algorithmic}
\end{algorithm}

\section{Addition and Rounding Operations for Traceable GTNs}
\label{sec:operations}

In this section, we introduce three operations on GTN tensor formats, namely addition, Hadamard product, and rounding, which significantly broaden the applicability of the GTN format to high-dimensional problems. Before doing so, we briefly introduce the concept of {\em traceable graphs}, which is foundational to both the addition and rounding operations. We then define addition and the Hadamard product, which together allow us to define a norm operation. Finally, we propose a rounding procedure and discuss how to implement iterative rounding schemes in which both addition and rounding are used. Such iterative schemes arise naturally, for instance, when integrating high-dimensional linear PDEs on GTN tensor manifolds using step-truncation methods.

\subsection{Traceable Paths in Graph Tensor Networks}

SVD-based algorithms in the TT format owe much of their effectiveness to the underlying path structure: because the cores are arranged in a line, there is an unambiguous ordering in which to perform QR and SVD operations. This ordering enables well-defined operations such as orthogonalization and recompression, as required, for instance, in TT-rounding. A GTN format, on the other hand, is not guaranteed such an arrangement, since the presence of a cycle leaves the traversal order undetermined. This poses a challenge for defining operations such as rounding, and motivates the search for a traversal ordering of the tensor network in GTN format that visits each core exactly once.

This is a \emph{traceable path} (or Hamiltonian path): a path in the network that visits every node, without repeating any node or edge. A graph that admits such a path is called \emph{traceable}. In graph theory terms, a traceable path is a spanning subgraph of the network, so ordering the cores along it exposes a TT-like structure inside the GTN, as shown in Figure~\ref{fig:gtn_trace}. We want to note that deciding traceability is NP-complete in general, but the graphs arising in tensor decompositions are small---their size is governed by the dimensionality of the problem, not the mode sizes---so a path can be located by exhaustive or heuristic search at negligible cost relative 
to the decomposition itself.
\begin{figure}[t]
  \centering
  \resizebox{1\textwidth}{!}{\begin{tabular}{@{}c@{\hspace{14mm}}c@{}}
\begin{tikzpicture}[scale=1.05,>=latex, baseline={(current bounding box.center)}]
  \coordinate (N1) at (-2.2, 1); \coordinate (N2) at (-2.2,-1);
  \coordinate (N5) at (-0.8, 0); \coordinate (N6) at ( 0.8, 0);
  \coordinate (N3) at ( 2.2, 1); \coordinate (N4) at ( 2.2,-1);
    \draw[tnbond]    (N1)--(N2); \draw[tnbond] (N1)--(N5); \draw[tnbond] (N2)--(N5);
      \draw[tnbond]    (N5)--(N6);
      \draw[tnbond]    (N3)--(N6); \draw[tnbond] (N4)--(N6); \draw[tnbond] (N3)--(N4);
  \draw[tnleg] (N1) -- ++(135:0.7) node[font=\scriptsize, shift={(135:0.2)}]{$i_1$};
  \draw[tnleg] (N2) -- ++(225:0.7) node[font=\scriptsize, shift={(225:0.2)}]{$i_2$};
  \draw[tnleg] (N3) -- ++(45:0.7)  node[font=\scriptsize, shift={(45:0.2)}]{$i_3$};
  \draw[tnleg] (N4) -- ++(315:0.7) node[font=\scriptsize, shift={(315:0.2)}]{$i_4$};
  \draw[tracepath] (N1) to[bend right=22] (N2);
  \draw[tracepath] (N2) to[bend right=26] (N5);
  \draw[tracepath] (N5) to[bend right=30] (N6);
  \draw[tracepath] (N6) to[bend left=26]  (N3);
  \draw[tracepath] (N3) to[bend left=22]  (N4);
  \node[physnode, label={[corelbl]above:$\mathcal{G}_1$}] at (N1) {};
  \node[physnode, label={[corelbl]below:$\mathcal{G}_2$}] at (N2) {};
  \node[physnode, label={[corelbl]above:$\mathcal{G}_3$}] at (N3) {};
  \node[physnode, label={[corelbl]below:$\mathcal{G}_4$}] at (N4) {};
  \node[intnode,  label={[corelbl]above:$\mathcal{G}_5$}] at (N5) {};
  \node[intnode,  label={[corelbl]above:$\mathcal{G}_6$}] at (N6) {};
\end{tikzpicture}
&
\begin{tikzpicture}[scale=1.0,>=latex, baseline={(current bounding box.center)}]
  \coordinate (M1) at (0.00,0); \coordinate (M2) at (1.45,0);
  \coordinate (M5) at (2.90,0); \coordinate (M6) at (4.35,0);
  \coordinate (M3) at (5.80,0); \coordinate (M4) at (7.25,0);
  \draw[tnbond] (M1)--(M2); \draw[tnbond] (M2)--(M5); \draw[tnbond] (M5)--(M6);
  \draw[tnbond] (M6)--(M3); \draw[tnbond] (M3)--(M4);
  \draw[tnbond] (M1) to[bend left=55] (M5);
  \draw[tnbond] (M6) to[bend left=55] (M4);
  \draw[tracepath] (M1) to[bend right=36] (M2);
  \draw[tracepath] (M2) to[bend right=36] (M5);
  \draw[tracepath] (M5) to[bend right=36] (M6);
  \draw[tracepath] (M6) to[bend right=36] (M3);
  \draw[tracepath] (M3) to[bend right=36] (M4);
  \draw[tnleg] (M1) -- ++(270:0.9) node[below,font=\scriptsize]{$i_1$};
  \draw[tnleg] (M2) -- ++(270:0.9) node[below,font=\scriptsize]{$i_2$};
  \draw[tnleg] (M3) -- ++(270:0.9) node[below,font=\scriptsize]{$i_3$};
  \draw[tnleg] (M4) -- ++(270:0.9) node[below,font=\scriptsize]{$i_4$};
  \node[physnode, label={[corelbl]above left:$\mathcal{G}_1$}]  at (M1) {};
  \node[physnode, label={[corelbl]above:$\mathcal{G}_2$}]       at (M2) {};
  \node[intnode,  label={[corelbl]above right:$\mathcal{G}_5$}] at (M5) {};
  \node[intnode,  label={[corelbl]above left:$\mathcal{G}_6$}]  at (M6) {};
  \node[physnode, label={[corelbl]above:$\mathcal{G}_3$}]       at (M3) {};
  \node[physnode, label={[corelbl]above right:$\mathcal{G}_4$}] at (M4) {};
\end{tikzpicture}
\\[3mm]
{\normalsize (a)} & {\normalsize (b)}
\end{tabular}}  
\caption{(a) A traceable path through an arbitrary GTN: the dotted blue arrows visit every core once, in the order
$\mathcal{G}_1\!\to\!\mathcal{G}_2\!\to\!\mathcal{G}_5\!\to\!\mathcal{G}_6\!\to\!\mathcal{G}_3\!\to\!\mathcal{G}_4$, crossing the edges $r_{1,2}\!\to\!r_{2,5}\!\to\!r_{5,6}\!\to\!r_{3,6}\!\to\!r_{3,4}$ between consecutive cores.
(b) Re-indexing the cores in this path order reveals the TT-like substructure.}
  \label{fig:gtn_trace}
\end{figure}
\begin{figure}[t]
    \centering
    \resizebox{\textwidth}{!}{\begin{tikzpicture}[scale=1,>=latex,
    yes/.style={font=\small\bfseries, edgegreen},
    no/.style={font=\small\bfseries, xred}]
\begin{scope}[shift={(0,0)}]
    \foreach \i in {1,...,4} {\coordinate (TT\i) at ({\i-2.5}, 0.3);}
    \draw[tnbond] (TT1)--(TT2)--(TT3)--(TT4);                 
    \foreach \i in {1,...,4} {                                
        \draw[tnleg] (TT\i) -- ++(0,-0.6) node[below,font=\scriptsize] {$i_{\i}$};}
    \draw[tracepath] (TT1) to[bend left=50] (TT2);
    \draw[tracepath] (TT2) to[bend left=50] (TT3);
    \draw[tracepath] (TT3) to[bend left=50] (TT4);
    \foreach \i in {1,...,4} {\node[physnode] at (TT\i) {};} 
    \node[yes] at (0,-2.3) {traceable};
\end{scope}
\begin{scope}[shift={(4,0)}]
    \def\rc{0.8}\def\rl{1.4}\def\rt{1.75}
    \foreach \i in {1,...,4} {\coordinate (TR\i) at ({45+90*(\i-1)}:\rc);}
    \draw[tnbond] (TR1)--(TR2)--(TR3)--(TR4)--(TR1);          
    \foreach \i in {1,...,4} {                                
        \draw[tnleg] (TR\i) -- ({45+90*(\i-1)}:\rl);
        \node[font=\scriptsize] at ({45+90*(\i-1)}:\rt) {$i_{\i}$};}
    \draw[tracepath] (TR1) to[bend right=55] (TR2);
    \draw[tracepath] (TR2) to[bend right=55] (TR3);
    \draw[tracepath] (TR3) to[bend right=55] (TR4);
    \foreach \i in {1,...,4} {\node[physnode] at (TR\i) {};}
    \node[yes] at (0,-2.3) {traceable};
\end{scope}
\begin{scope}[shift={(8,0)}]
    \def\rc{0.85}\def\rl{1.5}\def\rt{1.85}
    \foreach \i in {1,...,4} {\coordinate (F\i) at ({45+90*(\i-1)}:\rc);}
    \draw[tnbond] (F1)--(F2)--(F3)--(F4)--(F1);               
    \draw[tnbond] (F1)--(F3); \draw[tnbond] (F2)--(F4);       
    \foreach \i in {1,...,4} {                                
        \draw[tnleg] (F\i) -- ({45+90*(\i-1)}:\rl);
        \node[font=\scriptsize] at ({45+90*(\i-1)}:\rt) {$i_{\i}$};}
    \draw[tracepath] (F1) to[bend right=55] (F2);
    \draw[tracepath] (F2) to[bend right=55] (F3);
    \draw[tracepath] (F3) to[bend right=55] (F4);
    \foreach \i in {1,...,4} {\node[physnode] at (F\i) {};}
    \node[yes] at (0,-2.3) {traceable};
\end{scope}
\begin{scope}[shift={(12,0)}]
    \coordinate (H1) at (-1.5,-0.6);                  
    \coordinate (H2) at (-0.5,-0.6);
    \coordinate (H3) at ( 0.5,-0.6);
    \coordinate (H4) at ( 1.5,-0.6);
    \coordinate (H5) at (-1, 0.4);                    
    \coordinate (H6) at ( 1, 0.4);
    \coordinate (H7) at ( 0, 1.4);                    
    \draw[tnbond] (H7)--(H5); \draw[tnbond] (H7)--(H6);   
    \draw[tnbond] (H5)--(H1); \draw[tnbond] (H5)--(H2);
    \draw[tnbond] (H6)--(H3); \draw[tnbond] (H6)--(H4);
    \foreach \i in {1,...,4} {                        
        \draw[tnleg] (H\i) -- ++(0,-0.6) node[below,font=\scriptsize] {$i_{\i}$};}
    \draw[tracepath] (H1) to[bend left=45] (H5);
    \draw[tracepath] (H5) to[bend left=45] (H7);
    \draw[tracepath] (H7) to[bend left=45] (H6);
    \draw[tracepath] (H6) to[bend right=45] (H3);
    \draw[deadpath] (H5) to[bend left=45] (H2);
    \draw[deadpath] (H6) to[bend left=45] (H4);
    \foreach \i in {1,...,4} {\node[physnode] at (H\i) {};}   
    \foreach \i in {5,6,7} {\node[intnode] at (H\i) {};}      
    \node[no] at (0,-2.3) {not traceable};
\end{scope}
\end{tikzpicture}}
\caption{Traceability of common tensor network formats for $d=4$ (left to
right): TT, TR, FCTN, and HT. The dashed green arrows mark a
traceable path (except for HT) visiting every core once. TT is
trivial, being a path itself; TR becomes traceable by omitting a
single ring edge; and FCTN admits many traceable paths, one of which is shown. The HT tree admits none: two nodes cannot be visited without revisiting a node or edge.}
    \label{fig:gtn_traceable_ex}
\end{figure}
With a traceable path in hand, GTN is ``close enough'' to a TT that we can reuse TT-based algorithmic machinery. In particular, addition concatenates cores along the edges of the network, and rounding sweeps the traceable order to compress the result back to a prescribed tolerance.

\subsection{Tensor Addition}
\label{sec:add}
As is well known, for the TT and HT formats, addition is carried out by block concatenation: the cores of the two summands are stacked into disjoint blocks and padded with zeros off the diagonal, so that each edge dimension becomes the sum of the two incoming dimensions. This zero padding is a conceptual device rather than a computational one: in practice, neither the zero blocks are stored nor are any operations performed on them, so the added cost of addition is negligible relative to the size of the summands. Structurally, the zero padding is what makes the construction work: it guarantees that no contraction through the network can mix a core belonging to the first summand with a core belonging to the second. Every surviving term draws all of its cores from one summand or all from the other, so the network evaluates to $\mathcal{X}^1 + \mathcal{X}^2$.

Nothing about this argument fails when the network contains cycles. Concatenating every edge of a cyclic network block-diagonally still yields the exact sum, for the same reason. What fails is the aftermath: addition doubles every edge it touches, so in any iterative scheme (where a sum is formed at every step), the ranks must be compressed back down afterwards, and this is where the topology intervenes. Rounding a GTN proceeds by sweeping along a traceable path, and the edges lying on that path are orthogonalized and truncated; the remaining edges, i.e., the \emph{chords}, are carried along. A chord that addition has inflated from $r$ to $2r$ is therefore inflated: the sweep never touches it, and no subsequent rounding can undo the growth \cite{mickelin2020}. Over many steps the chord ranks increase upward without bound, even though the tensor they represent may not be dramatically changed than when it started.

This is resolved, following Mickelin and Karaman~\cite{mickelin2020}, by treating certain edges differently. Once the cores are laid out along a traceable path, each core's edges fall into one of two classes: the edges of the traceable path, joining the core to its immediate predecessor and successor, and the chords, joining it to other cores elsewhere in the network. Addition then concatenates along the traceable path, as in the tensor train, but overlays along the chords. Hence, after the addition, a traceable path edge of dimensions $r'$ and $r''$ becomes an edge of dimension $r' + r''$, while a chord of dimensions $r'$ and $r''$ becomes an edge of dimension $\max(r', r'')$, with both summands written into the chord starting from the same index so that they overlap rather than occupying separate blocks.
Writing $\alpha_{k-1}, \alpha_k$ for the traceable path indices of core $k$ and $\gamma$ for its chord indices, the cores of the sum are defined as
\begin{equation}
\mathcal{G}_k(\alpha_{k-1}, i_k, \alpha_k, \gamma) =
\begin{cases}
\mathcal{G}^{1}_k(\alpha_{k-1}, i_k, \alpha_k, \gamma),
& \alpha_{k-1}, \alpha_k \text{ in the first block},\ \gamma \le r',\\[4pt]
\mathcal{G}^{2}_k(\alpha_{k-1}, i_k, \alpha_k, \gamma),
& \alpha_{k-1}, \alpha_k \text{ in the second block},\ \gamma \le r'',\\[4pt]
0, & \text{otherwise,}
\end{cases}
\label{eq:gtn-add}
\end{equation}
The construction therefore inflates where inflation can be repaired. Traceable path edges grow additively, but they lie on the path, so the rounding procedure compresses them back to whatever the prescribed tolerance permits. Chords are frozen at the larger of the two incoming dimensions and never grow at all. This is not a concern in practice, since chord dimensions are governed by the network topology rather than by the number of additions performed, and remain small relative to the path ranks throughout. Repeated addition, the operation on which every step-truncation scheme rests, is in this way kept stable on networks with cycles.

\subsection{Hadamard Product and Frobenius Norm}
The Hadamard product between two tensors $\mathcal{X}^1$ and $\mathcal{X}^2$, denoted as $\mathcal{X}^1\odot \mathcal{X}^2$, is well suited to the graph format because it acts on each tensor entry independently. Fix an index tuple $(i_1,\ldots,i_p)$. The corresponding entry of $\mathcal{X}^1$ is a scalar obtained by contracting the network of slices 
$\{\mathcal{G}^1_1(i_1),\ldots,\mathcal{G}^1_p(i_p)\}$, and likewise for $\mathcal{X}^2$. The entry of the product is just the product of these two scalars. Two independent contractions may always be run side by side as a single contraction over the pair of networks, and the object that does this is the Kronecker product. Slicing each core at a fixed physical index and taking the product of the slices,
\begin{equation}
    \mathcal{G}_k(i_k)=\mathcal{G}^1_{k}(i_k)\otimes\mathcal{G}^2_{k}(i_k),
    \qquad i_k=1,\dots,n_k,\quad k=1,\ldots, p, 
    \label{eq:gtn-hadamard}
\end{equation}
\noindent
therefore produces cores whose contraction reproduces $\mathcal{X}^1 \odot \mathcal{X}^2$ entry by entry.
The graph topology is untouched by this construction, but the edge dimensions change. An edge that carried dimensions $r'_{i,j}$ and $r''_{i,j}$ in the two factors carries dimension
\begin{equation}
    r_{i,j}=r'_{i,j}\,r''_{i,j},
    \qquad i \ne j,
    \label{eq:gtn-hadamard-ranks}
\end{equation}
in the product, while the physical mode sizes $n_k$ are unchanged.
Everything else follows. The inner product of two tensors in the format is the sum of all entries of their Hadamard product, and summing all entries of a graph tensor network requires no reconstruction either. Taking $\mathcal{X}^1=\mathcal{X}^2=\mathcal{X}$ then gives the squared Frobenius norm of $\mathcal{X}$
\begin{equation}
    \|\mathcal{X}\|_F^2
    = \langle\mathcal{X},\mathcal{X}\rangle
    = \sum_{i_1,\dots,i_p}\mathcal{X}(i_1,\dots,i_p)^2
    = \sum_{i_1,\dots,i_p}(\mathcal{X}\odot\mathcal{X})(i_1,\dots,i_p),
    \label{eq:gtn-norm}
\end{equation}
where $(\mathcal{X}\odot\mathcal{X})(i_1,\dots,i_p)$ is computed directly from the cores.

\subsection{Tensor Rounding}

Tensor rounding (or truncation) is a crucial tensor operation, which takes a decomposed tensor with suboptimal ranks and returns one with compressed ranks. Formats like TT have a well-defined SVD-based rounding operation that is quasi-optimal \cite{oseledets2011}. Although rounding tensor formats with loops is known to be a challenging problem with fundamental limitations \cite{batselier2018}, here we propose a heuristic rounding operation that reduces the ranks of GTN formats containing loops with bounded error. To achieve this, the GTN must admit a traceable path.

Ordering the cores along a traceable path organizes the network into a chain with chords: along a chain a TT-like rounding procedure applies. We therefore round GTNs in two sweeps. The first passes right to left along the traceable path, replacing each core by an orthogonal factor and pushing the remaining triangular factor into its predecessor. This sweep is a change of basis and nothing more: it introduces no error, and it leaves the network orthogonalized. The second sweep passes left to right, taking a truncated SVD at each edge of the traceable path and discarding every singular value below a threshold $\delta$. Because the first sweep has orthogonalized everything to the right of the working core, the error committed at each edge is the tail of the discarded singular values. The chords take no part in either sweep. They are carried as additional legs of the cores they attach to, consistent with how chords are treated in the tensor addition operation we defined in Section \ref{sec:add}. 

On a network with no chords, i.e., a path, the proposed rounding procedure reduces to TT rounding and inherits its quasi-optimality. In the presence of chords it does not, and we claim no optimality. What it does retain, though, is an error bound. That bound rests on the choice of the truncation threshold $\delta$. 
Specifically, distributing a relative tolerance $\varepsilon$ across $m$ SVD truncations whose errors are governed by $\delta = \varepsilon \|\mathcal{X}\|_F / \sqrt{m}$ gives the desired bound. For instance, taking $m = d-1$ would recover the usual TT rule. Setting $m = (p-1) + c$, where $c$ is the number of chords, yields instead
\begin{equation}
	\delta  \;=\; \frac{\varepsilon}{\sqrt{(p-1)+c}}\,\|\tilde{\mathcal{X}}\|_F
	 \label{eq:gtn-delta}
\end{equation}
which budgets for truncations at the chords as well, even though the sweep performs none there. This is deliberately conservative: the quasi-optimality argument breaks down once the cores carry chord legs, so in the absence of a sharper bound, we compensate by shrinking every individual threshold. The cost is a mild over-truncation (ranks slightly larger than a sharper rule would permit) and the benefit is that the accumulated error remains below $\varepsilon\|\mathcal{X}\|_F$ in practice, uniformly across the topologies we tested.
The norm appearing in \eqref{eq:gtn-delta} is itself computed in the format by applying the Hadamard product discussed previously, so setting the truncation threshold never requires forming the full tensor.

\begin{algorithm}[t!]
\caption{\textsc{GTN-Round}: rounding a traceable path-ordered GTN}
\label{alg:gtn-round}
\begin{algorithmic}[1]
\Require Trace-ordered cores $\{\mathcal{G}_k\}_{k=1}^{p}$ with
         $\mathcal{G}_k\in\mathbb{R}^{\,r_{1,k}\times\cdots\times r_{k-1,k}\times n_k\times r_{k,k+1}\times\cdots\times r_{k,p}}$
         (axis $j$ is bond $(k,j)$ of size $r_{k,j}$, axis $k$ is the physical mode $n_k$);
         trace-ordered rank adjacency matrix $\mathbf{R}\in\mathbb{Z}^{p\times p}$ with
         $(\mathbf{R})_{i,j}=r_{i,j}$ and $(\mathbf{R})_{k,k}=n_k$;
         prescribed relative error $\varepsilon$.
\Ensure  Rounded cores $\{\mathcal{G}_k\}_{k=1}^{p}$ with
         $\|\mathcal{X}-\tilde{\mathcal{X}}\|_F \le \varepsilon\,\|\mathcal{X}\|_F$.
\Statex
\State Find number of skipped bonds $c$
\State Compute the truncation parameter
\Comment{Use GTN norm to compute $||\mathcal{\tilde{\mathcal{X}}}||_F$}
\begin{equation*}
    \delta \gets \dfrac{\varepsilon}{\sqrt{(p-1)+c}}\,\|\tilde{\mathcal{X}}\|_F
\end{equation*}
\Statex
\For{$k = p,\, p-1,\, \dots,\, 2$}
\Comment{Right-to-left orthogonalization sweep}
    \State Reshape $\mathcal{G}_k$ via its $(k\!-\!1)$-unfolding
           $\mathbf{G}^{(k)}_{(k-1)}\in\mathbb{R}^{\,r_{k-1,k}\times t_k}$,\ \ where
           $t_k=\textstyle\prod_{j=1,\,j\neq k-1}^{p} r_{j,k}$
    \State $[\mathbf{Q},\mathbf{T}_k]\gets\mathrm{QR}\!\big((\mathbf{G}^{(k)}_{(k-1)})^{T}\big)$,\ \ 
           $\mathbf{Q}\in\mathbb{R}^{\,t_k\times r'_{k-1,k}}$,\ 
           $\mathbf{T}_k\in\mathbb{R}^{\,r'_{k-1,k}\times r_{k-1,k}}$,\ 
           $r'_{k-1,k}=\min(t_k,r_{k-1,k})$
    \State Update $\mathbf{R}$: $r_{k-1,k}\gets r'_{k-1,k}$ \ 
    \State $\mathcal{G}_k \gets \operatorname{Reshape}\!\big(\mathbf{Q}^{T},\,
           [\,r_{k-1,k},\, r_{1,k},\ldots,r_{k-2,k},\, n_k,\, r_{k,k+1},\ldots,r_{k,p}\,]\big)$
    \State Permute $\mathcal{G}_k$ to axis order
    \begin{equation*}
        [\,r_{1,k},\ldots,r_{k-2,k},\, r_{k-1,k},\, n_k,\, r_{k,k+1},\ldots,r_{k,p}\,]
    \end{equation*}
    \State $\mathcal{G}_{k-1}\gets\mathcal{G}_{k-1}\times_{k}\mathbf{T}_k$
\EndFor
\Statex
\For{$k = 1,\, \dots,\, p-1$}
\Comment{Left-to-right compression sweep}
    \State Reshape $\mathcal{G}_k$ via its $(k\!+\!1)$-unfolding
           $\mathbf{G}^{(k)}_{(k+1)}\in\mathbb{R}^{\,r_{k,k+1}\times s_k}$,\ \ where
           $s_k=\textstyle\prod_{j=1,\,j\neq k+1}^{p} r_{j,k}$
    \State Compute low-rank approximation via $\delta$-truncated SVD:
           $(\mathbf{G}^{(k)}_{(k+1)})^{T}=\mathbf{U}\mathbf{\Sigma}\mathbf{V}^{T}+\mathbf{E}_k$,\ \ 
           $\mathbf{U}\in\mathbb{R}^{\,s_k\times r'_{k,k+1}}$,\ 
           $\mathbf{\Sigma}\mathbf{V}^{T}\in\mathbb{R}^{\,r'_{k,k+1}\times r_{k,k+1}}$
    \State Update $\mathbf{R}$: $r_{k,k+1}\gets r'_{k,k+1}$ \ 
    \State $\mathcal{G}_k \gets \operatorname{Reshape}\!\big(\mathbf{U},\,
           [\,r_{1,k},\ldots,r_{k-1,k},\, n_k,\, r_{k,k+2},\ldots,r_{k,p},\, r_{k,k+1}\,]\big)$
    \State Permute $\mathcal{G}_k$ to axis order
    \begin{equation*}
        [\,r_{1,k},\ldots,r_{k-1,k},\, n_k,\, r_{k,k+1},\, r_{k,k+2},\ldots,r_{k,p}\,]
    \end{equation*}
    \State $\mathcal{G}_{k+1}\gets\mathcal{G}_{k+1}\times_{k}(\mathbf{\Sigma}\mathbf{V}^{T})$
\EndFor
\end{algorithmic}
\end{algorithm}

\section{Numerical Experiments}
\label{sec:numerical_experiments} 
In this section we present numerical experiments demonstrating the efficiency of both GTN-SVD and GTN-rounding compared to their TT and HT counterparts. Throughout each experiment, we compare our Python implementation with the TT and HT tensor algorithm implemented in \verb|tt-toolbox| developed by Oseledets \cite{tttoolbox} and \verb|htucker| developed by Kressner \& Tobler \cite{htucker}.

Before presenting our results, we would like to provide a few comments on the formatting of the baseline TT and HT representations used in our comparisons. The ranks of a TT decomposition (and therefore its number of degrees of freedom) depend on the ordering of the tensor cores: permuting the cores can raise or lower the TT-ranks \cite{li2022}. This has been studied, for instance through algorithms that optimize the ordering to reduce the ranks \cite{tichavsky2025}. The same structural sensitivity motivates the active field of \emph{tensor network structure search} (TN-SS), which seeks the network topology best suited to a given task, including the minimization of storage or compression cost \cite{li2020, li2023, zeng2024}. We emphasize that our baselines implement none of these methods: we use the \verb|tt-toolbox| \cite{tttoolbox} and \verb|htucker| \cite{htucker} toolboxes with their default mode orderings and dimension trees, without any rearrangement or structure optimization. This is a deliberate choice, as no comparable arrangement-optimization techniques currently exist for the proposed GTN format. Structure optimization could improve the baselines and, in principle, our method, but this is left to future work.

\subsection{GTN-SVD of High-Dimensional Functions}

To assess the effectiveness of the GTN-SVD algorithm, we compare the TT and HT formats against two GTN topologies depicted in Figure~\ref{fig:gtn_bb_bt}. The first, which we call the \emph{barbell} (GTN-BB), is an eight-core graph in which the six ``physical'' cores (i.e., the cores with a free edge) are supplemented by two internal cores $\mathcal{G}_6,\mathcal{G}_7$, forming two $4$-cycles on $\{0,2,4,6\}$ and $\{1,3,5,7\}$ joined by the internal bridge $(6,7)$ (see Figure~\ref{fig:gtn_bb_bt}(a)). The second, the \emph{bowtie} (GTN-BT), uses no internal cores at all: its six physical cores form two $3$-cycles $\{0,2,4\}$ and $\{1,3,5\}$ joined by the bridge $(4,5)$ (see Figure~\ref{fig:gtn_bb_bt}(b)).
\begin{figure}[t]
  \centering
  \begin{tabular}{@{}c@{\hspace{12mm}}c@{}}
	\begin{tikzpicture}[scale=0.8, baseline={(current bounding box.center)}, >=latex]
		\coordinate (C1) at (-2.5,  1.5); \coordinate (C3) at (-4.0, 0);
		\coordinate (C5) at (-2.5, -1.5); \coordinate (C7) at (-1.0, 0);
		\coordinate (C8) at ( 1.0, 0); \coordinate (C2) at ( 2.5,  1.5);
		\coordinate (C4) at ( 4.0, 0); \coordinate (C6) at ( 2.5, -1.5);
		\draw[tnbond] (C1)--(C3); \draw[tnbond] (C3)--(C5);
		\draw[tnbond] (C5)--(C7); \draw[tnbond] (C7)--(C1);
		\draw[tnbond] (C2)--(C4); \draw[tnbond] (C4)--(C6);
		\draw[tnbond] (C6)--(C8); \draw[tnbond] (C8)--(C2);
		\draw[tnbond] (C7)--(C8);
		\draw[tnleg] (C1) -- ++(90:0.85)  node[above, font=\scriptsize]{$i_1$};
		\draw[tnleg] (C3) -- ++(180:0.85) node[left,  font=\scriptsize]{$i_3$};
		\draw[tnleg] (C5) -- ++(270:0.85) node[below, font=\scriptsize]{$i_5$};
		\draw[tnleg] (C2) -- ++(90:0.85)  node[above, font=\scriptsize]{$i_2$};
		\draw[tnleg] (C4) -- ++(0:0.85)   node[right, font=\scriptsize]{$i_4$};
		\draw[tnleg] (C6) -- ++(270:0.85) node[below, font=\scriptsize]{$i_6$};
		\node[physnode, label={[corelbl]left:$\mathcal{G}_1$}]  at (C1) {};
		\node[physnode, label={[corelbl]above:$\mathcal{G}_3$}] at (C3) {};
		\node[physnode, label={[corelbl]left:$\mathcal{G}_5$}]  at (C5) {};
		\node[intnode,  label={[corelbl]above:$\mathcal{G}_7$}] at (C7) {};
		\node[intnode,  label={[corelbl]above:$\mathcal{G}_8$}] at (C8) {};
		\node[physnode, label={[corelbl]right:$\mathcal{G}_2$}] at (C2) {};
		\node[physnode, label={[corelbl]above:$\mathcal{G}_4$}] at (C4) {};
		\node[physnode, label={[corelbl]right:$\mathcal{G}_6$}] at (C6) {};
	\end{tikzpicture}
	&
	\begin{tikzpicture}[scale=0.9, baseline={(current bounding box.center)}, >=latex]
		\coordinate (D1) at (-2.2, 1); \coordinate (D3) at (-2.2,-1);
		\coordinate (D5) at (-0.8, 0); \coordinate (D6) at ( 0.8, 0);
		\coordinate (D2) at ( 2.2, 1); \coordinate (D4) at ( 2.2,-1);
		\draw[tnbond] (D1)--(D3); \draw[tnbond] (D1)--(D5); \draw[tnbond] (D3)--(D5);
		\draw[tnbond] (D5)--(D6);
		\draw[tnbond] (D2)--(D6); \draw[tnbond] (D4)--(D6); \draw[tnbond] (D2)--(D4);
		\draw[tnleg] (D1) -- ++(135:0.85) node[above left, font=\scriptsize]{$i_1$};
		\draw[tnleg] (D3) -- ++(225:0.85) node[below left, font=\scriptsize]{$i_3$};
		\draw[tnleg] (D5) -- ++(270:0.85) node[below,      font=\scriptsize]{$i_5$};
		\draw[tnleg] (D6) -- ++(270:0.85) node[below,      font=\scriptsize]{$i_6$};
		\draw[tnleg] (D2) -- ++(45:0.85)  node[above right,font=\scriptsize]{$i_2$};
		\draw[tnleg] (D4) -- ++(315:0.85) node[below right,font=\scriptsize]{$i_4$};
		\node[physnode, label={[corelbl]left:$\mathcal{G}_1$}]  at (D1) {};
		\node[physnode, label={[corelbl]left:$\mathcal{G}_3$}]  at (D3) {};
		\node[physnode, label={[corelbl]above:$\mathcal{G}_5$}] at (D5) {};
		\node[physnode, label={[corelbl]above:$\mathcal{G}_6$}] at (D6) {};
		\node[physnode, label={[corelbl]right:$\mathcal{G}_2$}] at (D2) {};
		\node[physnode, label={[corelbl]right:$\mathcal{G}_4$}] at (D4) {};
	\end{tikzpicture}
	\\[8pt]
	{\normalsize (a)} & {\normalsize (b)}
\end{tabular}  
  \caption{Two GTN topologies representing a 6D tensor: (a) {\em Barbell} GTN format (GTN-BB); (b) {\em Bowtie} GTN format (GTN-BT).}
  \label{fig:gtn_bb_bt}
\end{figure}

To demonstrate GTN-SVD, we first build two 6D dense tensors $\mathcal{X}_i$ ($i=1,2$) by evaluating the scalar functions 
\begin{equation*}
f_i:[0,2\pi]^6 \to\mathbb{R}
\end{equation*}
\begin{equation}
\begin{aligned}
f_1(\mathbf{x}) \;=\; \exp\Big(
&-3\big(\cos(x_0-\pi)+\cos(x_2-\pi)+\cos(x_4-\pi)-0.5\big)^2\\
&+ \sin(x_0-\pi)\sin(x_2-\pi) + \sin(x_2-\pi)\sin(x_4-\pi) + \sin(x_0-\pi)\sin(x_4-\pi)\\
&+ 1.2\big(\cos(x_1-x_3)+\cos(x_3-x_5)+\cos(x_1-x_5)\big)
+ 0.8\cos\!\big(2(x_1+x_3+x_5)\big)\Big),
\end{aligned}
\label{eq:comp1}
\end{equation}
and
\begin{equation}
\begin{aligned}
f_2(\mathbf{x}) \;=\; \Big[1 + \exp\Big(
&\;3\big(\cos(x_0-\pi)+\cos(x_2-\pi)+\cos(x_4-\pi)-0.5\big)^2\\
&- \sin(x_0-\pi)\sin(x_2-\pi) - \sin(x_2-\pi)\sin(x_4-\pi) - \sin(x_0-\pi)\sin(x_4-\pi)\\
&+ 1.2\big(\cos(x_1-x_3)+\cos(x_3-x_5)+\cos(x_1-x_5)\big)\\
&+ 0.8\cos\!\big(2(x_1+x_3+x_5)\big) - 2\Big)\Big]^{-1}
\end{aligned}
\label{eq:comp2}
\end{equation}
on a uniform Cartesian grid with $N=24$ points along each axis. This gives two tensors with $24^{6} \approx 1.9\times 10^{8}$ entries each. We then decompose each $\mathcal{X}_i$ using the GTN-SVD algorithm and the BB and BT topologies shown in Figure~\ref{fig:gtn_bb_bt}, and compare against the TT and HT formats. In particular, we record the total number degrees of freedom (DoF), the wall-clock time, and the relative $L^2$ error for each prescribed tolerance $\varepsilon \in \{10^{-4}, 10^{-6}, 10^{-8}, 10^{-10}\}$.
\begin{figure}[t]
    \centering
    \includegraphics[width=\linewidth]{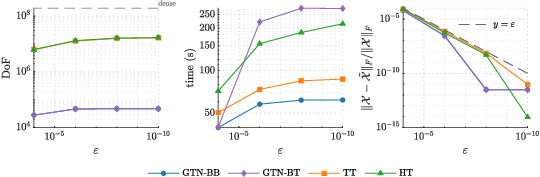}
\caption{Tensor decomposition of the  six-dimensional function \eqref{eq:comp1} using GTN-BB, GTN-BT, TT, and HT topologies: Shown are the degrees of freedom (DoF) (left),  compute time (middle) and relative error (right) versus the tolerance $\varepsilon$ used in Algorithm~\ref{alg:gtn-svd}.}
    \label{fig:gtn_svd_comp_1}
\end{figure}
\begin{figure}[t]
    \centering
    \includegraphics[width=\linewidth]{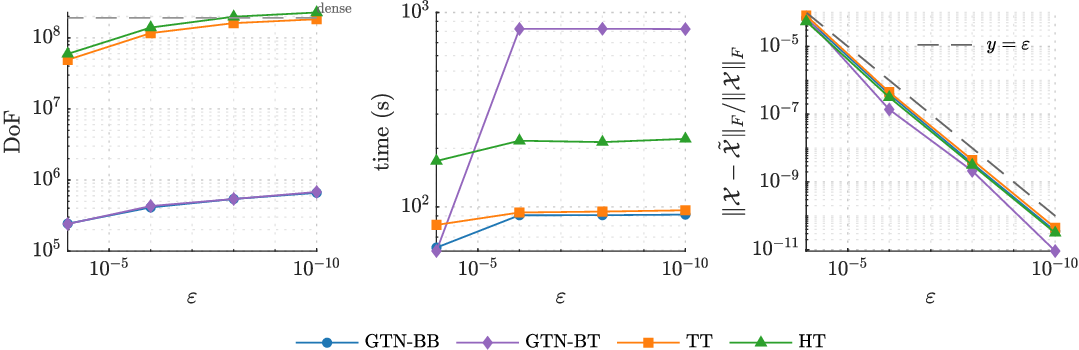}
\caption{Tensor decomposition of the six-dimensional  function \eqref{eq:comp2} using GTN-BB, GTN-BT, TT, and HT topologies: DoF vs. $\varepsilon$ (left); compute time in seconds vs. $\varepsilon$ (middle); relative error vs. $\varepsilon$ (right).}
    \label{fig:gtn_svd_comp_2}
\end{figure}
Figure~\ref{fig:gtn_svd_comp_1} and~\ref{fig:gtn_svd_comp_2} summarize these three metrics for both test functions \eqref{eq:comp1} and \eqref{eq:comp2}, respectively, across the four prescribed tolerances.
For the function $f_1$ in \eqref{eq:comp1}, all four formats remain below the prescribed tolerance, but they do so at significantly different storage cost. At $\varepsilon = 10^{-8}$ GTN-BB requires $47{,}264$ degrees of freedom, whereas TT requires $1.58\times 10^{7}$ and HT requires $1.62\times 10^{7}$; that is, TT exceeds GTN-BB in storage by a factor of $334$, and HT by a factor of $344$. The advantage widens as the tolerance tightens: these factors grow from $232$ and $222$ at $\varepsilon = 10^{-4}$ to $351$ and $358$ at $\varepsilon = 10^{-10}$. Relative to the full tensor, GTN-BB retains $0.025\%$ of the entries at $\varepsilon = 10^{-8}$, a compression factor of approximately $4\times 10^{3}$ compared to the full format, whereas TT and HT still require $8.7\%$ and $8.9\%$ of the dense storage at the tightest tolerance. We also note that, for the GTN results, the degrees of freedom remain fixed at $47{,}264$ for $\varepsilon = 10^{-8}$ and $10^{-10}$; at this rank the representation is already exact to machine precision, so the last two tolerances do not constitute independent data points. In wall-clock time the barbell is also the fastest format, outperforming TT by a factor of $1.4$ and HT by a factor of $3.0$ at $\varepsilon = 10^{-8}$.

For the function $f_2$ in \eqref{eq:comp2}, the tensor formats exhibit similar behavior. At $\varepsilon = 10^{-8}$ GTN-BB requires $5.40\times 10^{5}$ degrees of freedom, whereas TT requires $1.61\times 10^{8}$; that is, TT exceeds GTN-BB in storage by a factor of $299$. This factor grows from $202$ at $\varepsilon = 10^{-4}$ to $299$ at $\varepsilon = 10^{-8}$, then eases to $278$ at $\varepsilon = 10^{-10}$, where TT's storage is approaching the size of the dense tensor itself and therefore cannot grow further. Relative to the full tensor, GTN-BB retains $0.28\%$ of the entries at $\varepsilon = 10^{-8}$, a compression factor of approximately $3.5\times 10^{2}$ compared to the full format, whereas TT requires $84.4\%$ of the dense storage at the same tolerance and $95.8\%$ at the tightest---that is, no compression at all. The HT format fares even worse: although it meets every prescribed tolerance, its storage grows past the size of the dense tensor, requiring $103.9\%$ of the dense storage at $\varepsilon = 10^{-8}$ and $119.1\%$ at $\varepsilon = 10^{-10}$, so that HT exceeds GTN-BB in storage by a factor of $368$ at $\varepsilon = 10^{-8}$. In wall-clock time GTN-BB and TT are tied, differing by a factor of $1.04$ at $\varepsilon = 10^{-8}$, so the barbell is advantageous in storage at no additional cost in time, and it outperforms HT by a factor of $2.4$.

\subsection{Fokker--Planck Equation}

In this section we compute the numerical solution to the Fokker--Planck equation on a four-dimensional torus $\Omega = [0,2\pi]^4$ using GTN-based step rounding algorithms and Fourier pseudo-spectral methods. To this end, let us first consider the stochastic model
\begin{equation}
\label{stochastic-model}
{\text d} x_{i} = \eta\mu_i({\bf x})\,{\text d}t + \sigma\,{\text d}W_i,
\qquad i = 1,\ldots, 4,
\end{equation}
where $(x_1,\ldots,x_4)$ are the phase variables, $\eta$ is
a real number, and $(W_1,\ldots,W_4)$ is a standard vector-valued Wiener process.
In particular we choose
\begin{equation}
\label{vortex-drift}
\begin{aligned}
\mu_1({\bf x}) &= \sin(2x_1)\cos(2x_3), &
\mu_2({\bf x}) &= \sin(2x_2)\cos(2x_4), \\
\mu_3({\bf x}) &= -\cos(2x_1)\sin(2x_3),&
\mu_4({\bf x}) &= -\cos(2x_2)\sin(2x_4),
\end{aligned}
\end{equation}
which defines a divergence--free drift.
As is well known, the corresponding Fokker--Planck equation for 
the probability density function of state vector $\mathbf{\bf x}$ has the form
\begin{equation}
\label{fp-general}
\frac{\partial p({\bf x},t)}{\partial t} =
-\eta\sum_{i=1}^{4}
\frac{\partial}{\partial x_i}\!\left( \mu_i({\bf x})\, p({\bf x},t) \right)
+ \nu \sum_{i=1}^{4}\frac{\partial^2 p({\bf x},t)}{\partial x_i^2},
\qquad \nu = \frac{\sigma^2}{2}.
\end{equation}
Setting $\eta=0$ reduces \eqref{fp-general} to the diffusion equation, while
a non-constant ${\boldsymbol\mu}$ retains the full advection--diffusion structure.

We discretize \eqref{fp-general} in space by Fourier collocation on a uniform grid with $N = 64$ points per axis, so that the dense solution tensor carries $N^d = 64^4 \approx 1.68\times 10^{7}$ degrees of freedom, and we advance in time with a second-order Adams--Bashforth (AB2) integrator, initialized by a single forward-Euler step. GTN integration proceeds by \emph{step truncation}~\cite{rodgers2022}: each step is carried out within the graph format, forming the AB2 increment with the GTN addition and then compressing the result back to a prescribed relative tolerance $\varepsilon = 10^{-8}$ with GTN-Round (Algorithm~\ref{alg:gtn-round}). The two operations consist of addition, which inflates the edges along the traceable path while leaving the remaining edges unchanged, and rounding, which sweeps that same path to compress them back. Together, they return a low rank representation at every step, without ever forming the dense tensor.
We set the initial condition as
\begin{equation}
\begin{aligned}
p(\mathbf{x},0) \;\propto\;
&\ \exp\!\Big(\!-4\big(\cos(x_1-\pi)+\cos(x_3-\pi)-0.5\big)^2
    + 1.5\,\sin(x_1-\pi)\sin(x_3-\pi) \;+\\
&\qquad\quad 1.6\,\cos(x_2-x_4) + 1.2\,\cos\!\big(2(x_2+x_4)\big)\Big),
\end{aligned}
\label{eq:ic}
\end{equation}
normalized to unit mass so that $p({\bf x},0)$ is a valid probability density. 
We represent the solution in the graph tensor network barbell layout (GTN-BB) shown in Figure~\ref{fig:gtn_bb_diff}, and compare the performance of GTN-BB against the TT format. To this end, we record at every step the degrees of freedom of GTN-BB and TT, the step compute time, and the relative $L^2$ error against the dense reference solution.

\begin{figure}[t]
    \centering
    \begin{tikzpicture}[scale=1.15, baseline={(current bounding box.center)}, >=latex]
	\coordinate (N1) at (-2.2, 1); \coordinate (N3) at (-2.2,-1);
	\coordinate (N5) at (-0.8, 0); \coordinate (N6) at ( 0.8, 0);
	\coordinate (N2) at ( 2.2, 1); \coordinate (N4) at ( 2.2,-1);
	\draw[tnbond] (N1)--(N3); \draw[chordbond] (N1)--(N5); \draw[tnbond] (N3)--(N5);
	\draw[tnbond] (N5)--(N6);
	\draw[tnbond] (N2)--(N6); \draw[chordbond] (N4)--(N6); \draw[tnbond] (N2)--(N4);
	\draw[tnleg] (N1) -- ++(135:0.85) node[above left, font=\scriptsize]{$i_1$};
	\draw[tnleg] (N3) -- ++(225:0.85) node[below left, font=\scriptsize]{$i_3$};
	\draw[tnleg] (N2) -- ++(45:0.85)  node[above right,font=\scriptsize]{$i_2$};
	\draw[tnleg] (N4) -- ++(315:0.85) node[below right,font=\scriptsize]{$i_4$};
	\draw[tracepath] (N1) to[bend right=30] (N3);
	\draw[tracepath] (N3) to[bend right=30] (N5);
	\draw[tracepath] (N5) to[bend left=25]  (N6);
	\draw[tracepath] (N6) to[bend left=30]  (N2);
	\draw[tracepath] (N2) to[bend left=30]  (N4);
	\node[physnode, label={[corelbl]above:$\mathcal{G}_1$}] at (N1) {};
	\node[physnode, label={[corelbl]below:$\mathcal{G}_3$}] at (N3) {};
	\node[intnode,  label={[corelbl]above:$\mathcal{G}_5$}] at (N5) {};
	\node[intnode,  label={[corelbl]above:$\mathcal{G}_6$}] at (N6) {};
	\node[physnode, label={[corelbl]above:$\mathcal{G}_2$}] at (N2) {};
	\node[physnode, label={[corelbl]below:$\mathcal{G}_4$}] at (N4) {};
	\matrix (R) [matrix of math nodes, ampersand replacement=\&,
	left delimiter=(, right delimiter=),
	nodes={minimum size=0.72cm, anchor=center, font=\small},
	shift={(7.6,0)}]
	{
		n_1     \& r_{1,3} \& r_{1,5} \& 1      \& 1      \& 1      \\
		r_{1,3} \& n_3     \& r_{3,5} \& 1      \& 1      \& 1      \\
		r_{1,5} \& r_{3,5} \& 1       \& r_{5,6} \& 1      \& 1      \\
		1       \& 1       \& r_{5,6} \& 1       \& r_{2,6} \& r_{4,6} \\
		1       \& 1       \& 1       \& r_{2,6} \& n_2    \& r_{2,4} \\
		1       \& 1       \& 1       \& r_{4,6} \& r_{2,4} \& n_4    \\
	};
	\begin{scope}[on background layer]
		\foreach \i/\j in {1/2,2/1,2/3,3/2,3/4,4/3,4/5,5/4,5/6,6/5}
		\fill[edgegreen!35] (R-\i-\j.north west) rectangle (R-\i-\j.south east);
		\foreach \i/\j in {1/3,3/1,4/6,6/4}
		\fill[chordorange!35] (R-\i-\j.north west) rectangle (R-\i-\j.south east);
	\end{scope}
	\foreach \k/\g in {1/1,2/3,3/5,4/6,5/2,6/4} {
		\node[font=\scriptsize, left=10pt]  at (R-\k-1.west)  {$\mathcal{G}_{\g}$};
		\node[font=\scriptsize, above=6pt] at (R-1-\k.north) {$\mathcal{G}_{\g}$};
	}
\end{tikzpicture}
\caption{The GTN barbell format (GTN-BB) used to represent the solution of the four-dimensional Fokker--Planck equation \eqref{fp-general}. Left: graph tensor network with six cores: four physical ($\mathcal{G}_0,\mathcal{G}_1, \mathcal{G}_2,\mathcal{G}_3$, carrying the physical indices $i_0,i_1,i_2,i_3$, respectively) and two internal ($\mathcal{G}_4,\mathcal{G}_5$). The cores are arranged as two triangles joined by the bridge $(\mathcal{G}_4,\mathcal{G}_5)$. The dashed green arrows trace the traceable path $\mathcal{G}_0 \to \mathcal{G}_2 \to \mathcal{G}_4 \to \mathcal{G}_5 \to \mathcal{G}_1 \to \mathcal{G}_3$ used for the rounding procedure; orange edges correspond to ranks that freeze during addition. Right: the corresponding rank adjacency matrix $\mathbf{R}$, with its rows and columns permuted into traceable order.}
    \label{fig:gtn_bb_diff}
\end{figure}

\subsubsection{Zero Drift and Constant Diffusion}
Setting $\eta = 0$ in \eqref{fp-general} reduces the Fokker--Planck equation
 to the classical diffusion equation with periodic boundary conditions,
\begin{equation}
    \dfrac{\partial p(\mathbf{x},t)}{\partial t} = \nu \nabla^2 p(\mathbf{x},t),
\end{equation}

\noindent
which we solve with diffusion constant $\nu = 1$. We use a time step $\Delta t = 2\times 10^{-4}$ for $5000$ steps up to final time $t = 1$.
\begin{figure}[t]
    \centering
    \includegraphics[width=\linewidth]{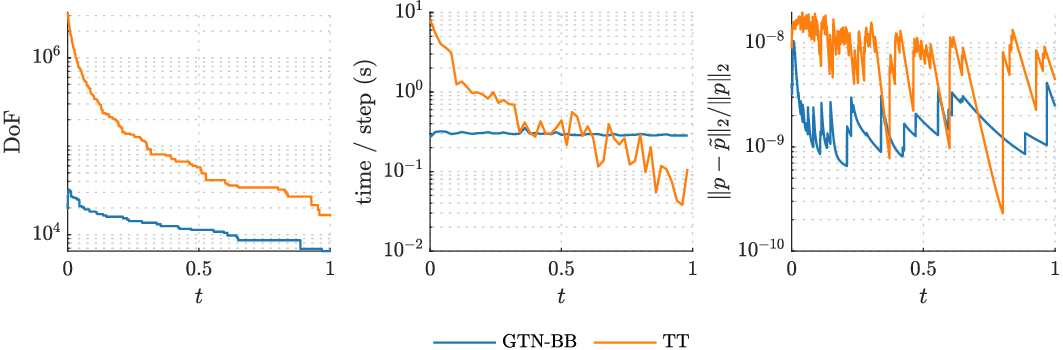}
    \caption{Fokker--Planck equation \eqref{fp-general} with no drift term ($\eta=0$), $\nu=1$, and initial condition \eqref{eq:ic}. Comparison between the performance of GTN-BB and TT representations of the solution in terms of DoF (left), computation time (middle), and relative error (right), all plotted versus time.}
    \label{fig:diffusion_err_comp}
\end{figure}
In Figure~\ref{fig:diffusion_err_comp} we provide a comparison between the performance of GTN-BB and TT representations of the PDF solution in terms of DoF (left), computation time (middle), and relative error (right), all plotted versus time. It is seen that both formats hold the relative error at or below the prescribed tolerance $\varepsilon = 10^{-8}$ for every step: the GTN-BB error never exceeds $1.04\times 10^{-8}$ and ends at $2.46\times 10^{-9}$, while the TT error never exceeds $1.95\times 10^{-8}$ and ends at $4.38\times 10^{-9}$. The two formats therefore deliver similar accuracy, and the comparison rests on the storage and time required to achieve it.

The results display the most notable difference in DoF, where the gap is largest initially. At the first step GTN-BB requires $19{,}528$ degrees of freedom against $3.25\times 10^{6}$ for TT, so that TT exceeds GTN-BB in storage by a factor of $166$. GTN-BB reaches a maximum of $32{,}401$ degrees of freedom at $t \approx 3\times 10^{-3}$, which is $0.19\%$ of the $64^{4} = 16{,}777{,}216$ entries of the full tensor. TT is largest at the initial condition, where it occupies $19.4\%$ of dense. Thereafter both representations shrink as diffusion damps the high-frequency content of the solution, and the advantage narrows monotonically: TT exceeds GTN-BB by a factor of $60$ at $t = 10^{-2}$, $20$ at $t = 10^{-1}$, $5.1$ at $t = 0.5$, and $2.6$ at the final time $t = 1$, where the two formats require $16{,}640$ and $6{,}508$ degrees of freedom respectively. This behavior is expected: as the solution relaxes toward the uniform steady state, its rank collapses in every format, and the particular topology matters less.

The timing results follow a similar pattern initially. Over the full run GTN-BB requires $1460$ s against $2737$ s for TT and $3338$ s for the full reference, so GTN-BB outperforms TT by a factor of $1.87$ and the dense solver by a factor of $2.29$. This aggregate conceals a crossover. Over the first $50$ steps, where the ranks are largest, TT is slower than GTN-BB by a factor of $21.6$, and over the first $250$ steps by a factor of $12.5$. The TT step cost falls from $8.07$ s to $0.06$ s as its ranks decrease, whereas the GTN-BB step cost remains nearly constant at $0.29$ s. Over the final $1000$ steps TT is the faster of the two, by a factor of $2.8$. The advantage of the graph format is thus concentrated in the regime where the solution is high-rank.

\begin{figure}[t]
    \centering
    \includegraphics[width=\linewidth]{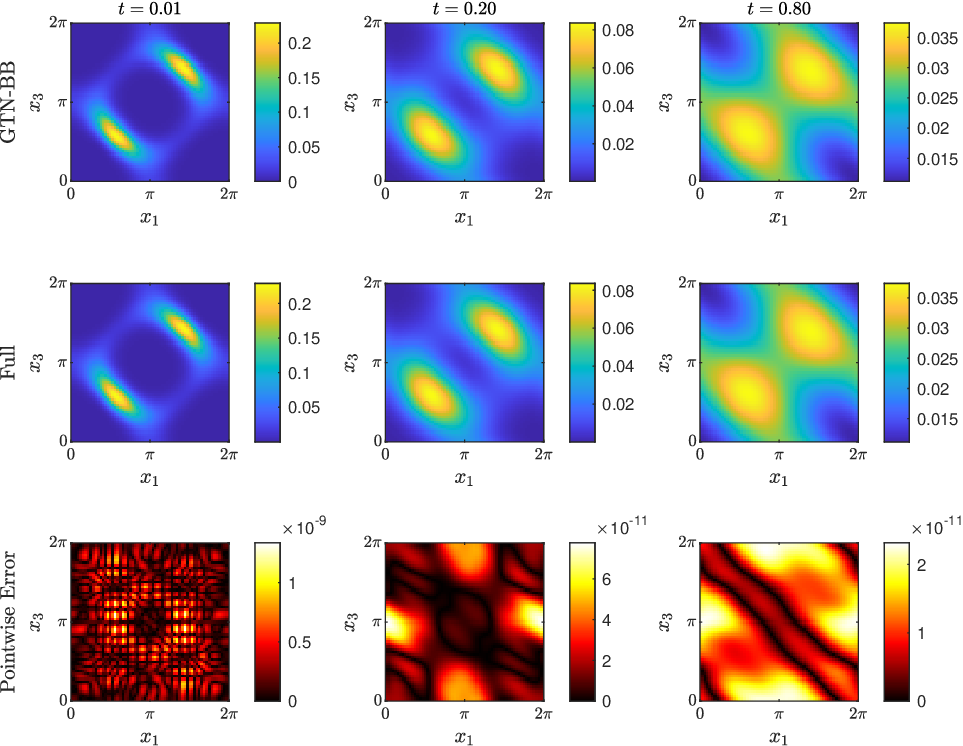}
   \caption{Fokker--Planck equation \eqref{fp-general} with no drift term ($\eta=0$), $\nu=1$, and initial condition \eqref{eq:ic}. We plot the $(x_1,x_3)$ marginals of the solution obtained using GTN-BB with tolerance $\varepsilon=10^{-8}$ at different times, and the error relative to the full (dense) tensor computation.}
    \label{fig:diffusion_comp_1}
\end{figure}

In Figure~\ref{fig:diffusion_comp_1} we plot the $(x_1,x_3)$ marginals of the solution obtained using GTN-BB with tolerance $\varepsilon=10^{-8}$ at different times, and the error relative to the full (dense) tensor computation. It is seen that the GTN-BB solution is indistinguishable from the dense solution at every snapshot, and the pointwise error in the bottom row remains at the level of the prescribed tolerance: over the full integration period, the marginal error never exceeds $6.40\times 10^{-9}$.

\subsubsection{Non-Constant Drift and Constant Diffusion}

Next, we set $\eta = 1$ and $\nu = 0.1$ in \eqref{fp-general}. Since the drift \eqref{vortex-drift} is divergence-free, the advection term in \eqref{fp-general} reduces to $-{\boldsymbol\mu}\cdot\nabla p$. Moreover, since the diffusion coefficient is homogeneous (does not depend on $\bf x$), the time-asymptotic solution of \eqref{fp-general} is a uniform PDF on the torus. We use a time step $\Delta t = 10^{-4}$ for $2\times 10^{4}$ steps up to final time $t = 2$.
\begin{figure}[t]
    \centering
    \includegraphics[width=\linewidth]{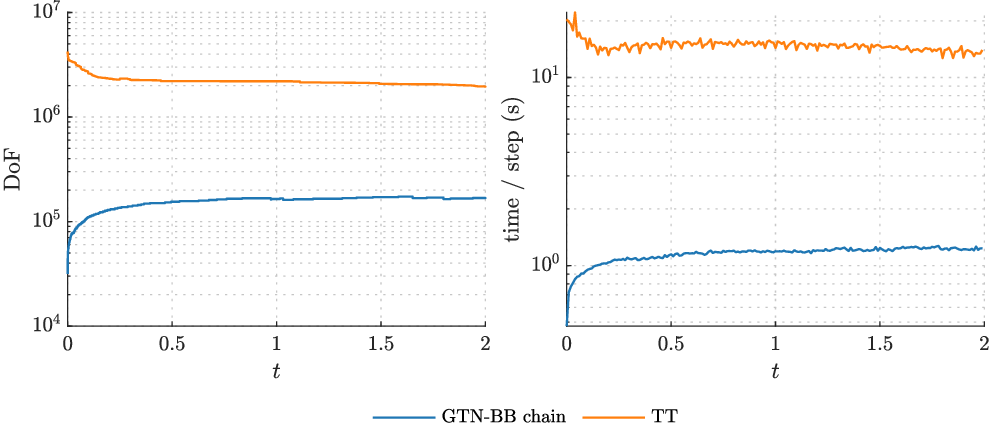}
    \caption{Fokker--Planck equation \eqref{fp-general} with $\eta=1$, $\nu=0.1$, and initial condition \eqref{eq:ic}. Comparison between the performance of GTN-BB and TT representations of the solution in terms of DoF (left) and computation time (right), all plotted versus time.}
    \label{fig:fp_comp}
\end{figure}

\begin{figure}[t!]
    \centering
    \includegraphics[width=\linewidth]{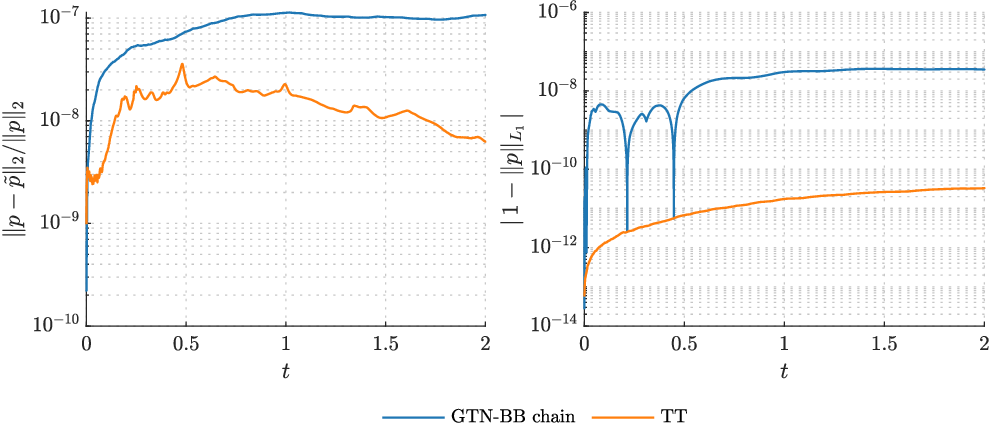}
    \caption{Fokker--Planck equation \eqref{fp-general} with $\eta=1$, $\nu=0.1$, and initial condition \eqref{eq:ic}. Comparison between the performance of GTN-BB and TT representations of the solution in terms of error relative to full (dense) tensor computations (left) and mass conservation (right), all plotted versus time.}
    \label{fig:fp_comp_err}
\end{figure}

\begin{figure}[t!]
    \centering
    \includegraphics[width=\linewidth]{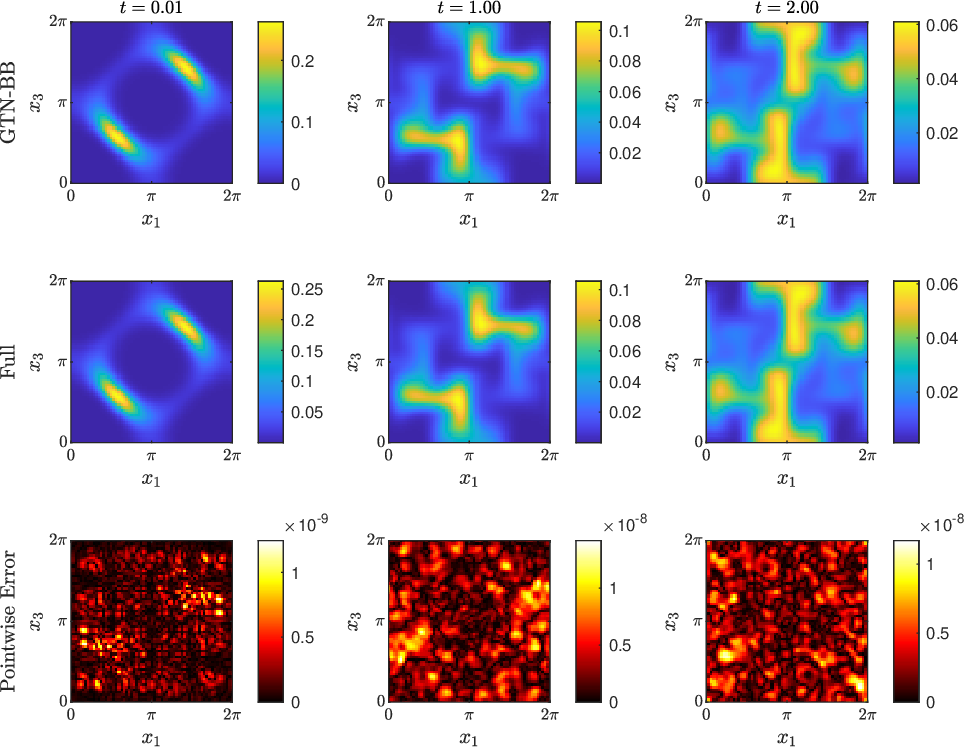}
    \caption{Fokker--Planck equation \eqref{fp-general} with $\eta=1$, $\nu=0.1$, and initial condition \eqref{eq:ic}. We plot the $(x_1,x_3)$ marginals of the solution obtained using GTN-BB with tolerance $\varepsilon=10^{-9}$ at different times, and the error relative to the full (dense) tensor computation.}
    \label{fig:fp_err}
\end{figure}

In Figures~\ref{fig:fp_comp} and \ref{fig:fp_comp_err} we provide a comparison between the performance of GTN-BB and TT representations of the PDF solution in terms of DoF, computation time, relative error, and mass conservation, all plotted versus time.
First, we see that the storage comparison favors the GTN-BB format over TT by a wide margin (more than one order of magnitude). The timing results also display clear separation (approx one order of magnitude) between the performance of the two tensor formats: the mean step costs are approximately $1.16$ s and $13.6$ s respectively.
However, unlike the zero-drift case, the two formats do not deliver comparable accuracy. In particular, at the final time the GTN-BB relative error reaches $1.07\times 10^{-7}$, while the TT error first increases and then decreases to $6.23\times 10^{-9}$. Hence here TT is slightly more accurate than GTN-BB. Both methods exceed the per-step tolerance $\varepsilon = 10^{-9}$, which is expected. Indeed, over $2\times 10^{4}$ steps the truncation error at each step accumulates, and therefore the reported error measures the departure from the dense solution over the whole trajectory rather than the error of a single truncation. The same ordering holds for mass conservation, shown in the right panel of Figure~\ref{fig:fp_comp_err}. The TT solution conserves total mass to $3.31\times 10^{-11}$, whereas GTN-BB conserves it to $3.50\times 10^{-8}$. Both are far below the accuracy of the solution itself, but TT is the more conservative of the two formats.

Finally, in Figure~\ref{fig:fp_err} we plot the $(x_1,x_3)$ PDF marginals of the solution obtained using GTN-BB with tolerance $\varepsilon=10^{-9}$ at different times, and the error relative to the full (dense) tensor computation. The GTN-BB reconstruction is visibly indistinguishable from the dense solution as the initial ring is wound by the vortex. The pointwise error relative to the full (dense) tensor solution remains at the level of the accumulated error. In particular, over the full integration period the pointwise error never exceeds $1.12\times 10^{-7}$.

\section{Summary}
\label{sec:summary}

We introduced a new SVD-based tensor decomposition method for tensor networks with arbitrary graph topologies, extending classical hierarchical SVD-based techniques to networks with cycles and general connectivity. We also introduced addition and rounding procedures for traceable tensor graphs, enabling step-truncation time integration of high-dimensional PDEs directly in graph format, with rank truncation controlled to a prescribed tolerance at every step. We demonstrated the proposed methods on the decomposition of multivariate functions and on the numerical solution of the Fokker--Planck equation, where the graph-format representation attained comparable or better accuracy than the tensor train and hierarchical Tucker formats while using substantially fewer degrees of freedom at lower computational cost.

We emphasize that our results are empirical, and we make no theoretical claim that a graph tensor network (GTN) will outperform the tensor train or hierarchical Tucker in general. Rather, our experiments provide evidence that there are cases in which GTN is more efficient than the fixed path and tree formats, and is at least competitive with them in the others we considered. This leaves considerable room for future work. In its current form the iterative rounding algorithm depends on a heuristic choice of traceable paths through the network. A principled strategy for this selection, together with a sharper truncation criterion on the cyclic edges, would likely improve the rounding algorithm further. Extensions to more complex graphs, higher dimensions, and a wider range of evolution equations are natural directions for future work as well.

\section*{Acknowledgements}
\noindent
This work was supported by the U.S. Air Force Office of Scientific Research (AFOSR), contract number FA9550-24-1-0254.

\bibliographystyle{elsarticle-num}

\bibliography{ref}

@article{grasedyck2010,
  title = {Hierarchical Singular Value Decomposition of Tensors},
  author = {Grasedyck, L.},
  journal = {{SIAM} Journal on Matrix Analysis and Applications},
  volume = {31},
  number = {4},
  pages = {2029--2054},
  year = 2010
}

@article{oseledets2011,
  title = {Tensor-Train Decomposition},
  author = {Oseledets, I. V.},
  journal = {{SIAM} Journal on Scientific Computing},
  volume = {33},
  number = {5},
  pages = {2295--2317},
  year = 2011
}

@misc{zhao2016,
  title = {Tensor Ring Decomposition},
  author = {Zhao, Q. and Zhou, G. and Xie, S. and Zhang, L. and Cichocki, A.},
  year = 2016,
  note = {arXiv:1606.05535}
}

@misc{mickelin2020,
  title = {On Algorithms for and Computing with the Tensor Ring Decomposition},
  author = {Mickelin, O. and Karaman, S.},
  year = 2020,
  note = {arXiv:1807.02513}
}

@article{zheng2021,
  title = {Fully-Connected Tensor Network Decomposition and Its Application to Higher-Order Tensor Completion},
  author = {Zheng, Y.-B. and Huang, T.-Z. and Zhao, X.-L. and Zhao, Q. and Jiang, T.-X.},
  journal = {Proceedings of the {AAAI} Conference on Artificial Intelligence},
  volume = {35},
  number = {12},
  pages = {11071--11078},
  year = 2021
}

@article{wangFCTN2024,
  title = {{SVD}-based Algorithms for Fully-Connected Tensor Network Decomposition},
  author = {Wang, M. and Li, H.},
  journal = {Computational and Applied Mathematics},
  volume = {43},
  number = {5},
  pages = {265},
  year = 2024
}

@article{wangTW2024,
  title = {{SVD}-based Algorithms for Tensor Wheel Decomposition},
  author = {Wang, M. and Cui, H. and Li, H.},
  journal = {Advances in Computational Mathematics},
  volume = {50},
  number = {5},
  pages = {99},
  year = 2024
}

@misc{batselier2018,
  title = {The Trouble with Tensor Ring Decompositions},
  author = {Batselier, K.},
  year = 2018,
  note = {arXiv:1811.03813}
}

@article{htucker,
  title = {Algorithm 941: {Htucker}---A {Matlab} Toolbox for Tensors in Hierarchical Tucker Format},
  author = {Kressner, D. and Tobler, C.},
  journal = {{ACM} Transactions on Mathematical Software},
  volume = {40},
  number = {3},
  pages = {22:1--22:22},
  year = 2014
}

@misc{tttoolbox,
  title = {{TT-Toolbox}: The Git Repository for the {TT-Toolbox}},
  author = {Oseledets, I. V.},
  year = 2014
}

@article{tichavsky2025,
  title = {Optimizing the Order of Modes in Tensor Train Decomposition},
  author = {Tichavsk{\'y}, P. and Straka, O.},
  journal = {{IEEE} Signal Processing Letters},
  volume = {32},
  pages = {1361--1365},
  year = 2025
}

@article{dektor2021,
  title = {Rank-Adaptive Tensor Methods for High-Dimensional Nonlinear {PDEs}},
  author = {Dektor, A. and Rodgers, A. and Venturi, D.},
  journal = {Journal of Scientific Computing},
  volume = {88},
  number = {2},
  pages = {36},
  year = 2021
}

@article{rodgers2022,
  title = {Adaptive Integration of Nonlinear Evolution Equations on Tensor Manifolds},
  author = {Rodgers, A. and Dektor, A. and Venturi, D.},
  journal = {Journal of Scientific Computing},
  volume = {92},
  number = {2},
  pages = {39},
  year = 2022
}

@article{rodgers2023,
  title = {Implicit Integration of Nonlinear Evolution Equations on Tensor Manifolds},
  author = {Rodgers, A. and Venturi, D.},
  journal = {Journal of Scientific Computing},
  volume = {97},
  number = {2},
  pages = {33},
  year = 2023
}

@article{einkemmer2025,
  title = {A Review of Low-Rank Methods for Time-Dependent Kinetic Simulations},
  author = {Einkemmer, L. and Kormann, K. and Kusch, J. and McClarren, R. G. and Qiu, J.-M.},
  journal = {Journal of Computational Physics},
  volume = {538},
  pages = {114191},
  year = 2025
}

@book{risken1996,
  title = {The {Fokker-Planck} Equation: Methods of Solution and Applications},
  author = {Risken, H.},
  publisher = {Springer},
  year = 1996
}

@article{venturi2021,
  title = {Spectral methods for nonlinear functionals and functional differential equations},
  author = {Venturi, D. and Dektor. A},
  journal = {Research in the Mathematical Sciences},
  volume = {8},
  pages = {1-39},
  year = 2021
}

@article{venturi2018,
  title = {The numerical approximation of nonlinear functionals and functional differential equations},
  author = {Venturi, D.},
  journal = {Physics Reports},
  volume = {732},
  pages = {1-102},
  year = 2018
}

@article{rodgers2024,
  title = {Tensor approximation of functional differential equations},
  author = {Rodgers, A.  and Venturi, D.},
  journal = {Physical Review E},
  volume = {110},
  pages = {015310},
  year = 2024
}

@article{boelens2020a,
  title = {Tensor Methods for the {Boltzmann-BGK} Equation},
  author = {Boelens, A. M. P. and Venturi, D. and Tartakovsky, D. M.},
  journal = {Journal of Computational Physics},
  volume = {421},
  pages = {109744},
  year = 2020
}

@book{cercignani,
  title = {The {Boltzmann} Equation and Its Applications},
  author = {Cercignani, C.},
  series = {Applied Mathematical Sciences},
  volume = {67},
  publisher = {Springer},
  address = {New York},
  year = 1988
}

@article{dimarco2014,
  title = {Numerical Methods for Kinetic Equations},
  author = {Dimarco, G. and Pareschi, L.},
  journal = {Acta Numerica},
  volume = {23},
  pages = {369--520},
  year = 2014
}

@article{tang2024,
  title = {Solving High-Dimensional {Fokker-Planck} Equation with Functional Hierarchical Tensor},
  author = {Tang, X. and Ying, L.},
  journal = {Journal of Computational Physics},
  volume = {511},
  pages = {113110},
  year = 2024
}

@inproceedings{li2020,
  title = {Evolutionary Topology Search for Tensor Network Decomposition},
  author = {Li, C. and Sun, Z.},
  booktitle = {Proceedings of the 37th International Conference on Machine Learning},
  volume = {119},
  pages = {5947--5957},
  year = 2020
}

@inproceedings{li2022,
  title = {Permutation Search of Tensor Network Structures via Local Sampling},
  author = {Li, C. and Zeng, J. and Tao, Z. and Zhao, Q.},
  booktitle = {Proceedings of the 39th International Conference on Machine Learning},
  pages = {13106--13124},
  year = 2022
}

@inproceedings{li2023,
  title = {Alternating Local Enumeration ({TnALE}): Solving Tensor Network Structure Search with Fewer Evaluations},
  author = {Li, C. and Zeng, J. and Li, C. and Caiafa, C. F. and Zhao, Q.},
  booktitle = {Proceedings of the 40th International Conference on Machine Learning},
  volume = {202},
  pages = {20384--20411},
  year = 2023
}

@article{zeng2024,
  title = {Bayesian Tensor Network Structure Search and Its Application to Tensor Completion},
  author = {Zeng, J. and Zhou, G. and Qiu, Y. and Li, C. and Zhao, Q.},
  journal = {Neural Networks},
  volume = {175},
  pages = {106290},
  year = 2024
}

\end{document}